\documentclass[a4paper, 12pt]{article}
\usepackage{amssymb,amsmath,latexsym,color,graphicx,stmaryrd}
\usepackage{hyperref}

\newtheorem{dref}{Definition}[section] 
\newtheorem{theo}[dref]{Theorem}
\newtheorem{prop}[dref]{Proposition}
\newtheorem{remark}[dref]{Remark}
\newtheorem{ex}[dref]{Example}

\newtheorem{assumption}[dref]{Assumptions}

\def\Ai{\mathop{\rm Ai}\nolimits}

\def\Tr{\mathop{\rm Tr}\nolimits}
\def\WF{\mathop{\rm WF}\nolimits}

\def\e {\mu}

\def\indt{{t}}
\def\indE{{E}}

\def\E{E}

\def\nE{\mathrm{E}'}
\def\np{\mathrm{p}'}
\def\nt{\mathrm{t}'}
\def\nx{\mathrm{x}'}

\def\pd{\partial}

\def\R{\mathbb{R}}
\def\S{\mathbb{S}}
\def\C{\mathcal{C}}
\def\Alg{\mathcal{E}}

\def\O{{O}}

\def\pxE{\pi_{E}}
\def\pxt{\pi_{t}}

\def\ve{\varepsilon}

\def\supp{\mathop{\mathrm{supp}}\nolimits}

\title{Asymptotic solutions: glancing trajectories, Lagrangian singularities, and Bessel cylinders}

\author{Ilya BOGAEVSKII$^{1,2}$ \& Michel ROULEUX$^3$}
\date{}
\begin{document}

\maketitle

\vskip-12pt
$^1$Guangdong Technion --- Israel Institute of Technology, China

$^2$partially supported by Isaac Newton Institute for Mathematical Sciences and London Mathematical Society, UK

ibogaevsk@gmail.com

\vskip12pt
{$^3$Universit\'e de Toulon, Aix Marseille Univ, CPT, CNRS, France

michel.rouleux@univ-tln.fr}

\vskip24pt

\begin{abstract}
  We find a normal form for the simplest Lagrangian singularity appearing as the projection onto the space-energy of
  a solution to the Hamilton--Jacobi equation. Using this normal form we approximate asymptotic solutions of an equation
  $(\widehat{H}-E_0) \, u_h = f_h$ with a semiclassical distribution $f_h$ microlocalized on a Lagrangian manifold $\Lambda_0$
  when $E_0$ is a critical value of the restriction $H|_{\Lambda_0}$.
\end{abstract}

\medskip

\noindent
{\small \textbf{Key words:} semiclassical approximations, microlocalized distributions, Lag\-ran\-gi\-an singularities, Hamilton--Jacobi equation, projection to the space-energy, glancing Lagrangian submanifolds and hypersurfaces, Bessel beams and Bessel cylinders}

\bigskip

\noindent
{\small \textbf{2020 MSC:} 35C, 41, 44A, 33



\section*{Introduction}

We investigate asymptotic solutions to the equation
\begin{equation}
\label{beqf}
\bigl( \widehat{H} - \E \bigr) u_h (x,\E) = f_h(x), 
\quad x \in \R^n, \quad \E \in \R
\end{equation}
where $\widehat{H}$ is a (pseudo)differential operator with smooth\footnote[4]{Here and further ``smooth'' means $C^\infty$.}   coefficients depending on $x$,
the energy $E$ is considered as a parameter of the problem, and the right hand side $f_h$ is microlocalized on a smooth
Lagrangian submanifold $\Lambda_0$, i.\,e.
$
f_h = \mathcal{K}_{\Lambda_0} (a),
$
$\mathcal{K}_{\Lambda_0}$ is the Maslov canonical operator on $\Lambda_0$, and the amplitude $a: \Lambda_0 \to \R$ is a smooth function with a compact support.
In the simplest case:
$$
f_h (x) = a(x) \exp{\frac{i S_0(x)}{h}}, \quad \Lambda_0 = \left\{ p = \pd_x S_0 \right\}
$$
where $S_0$ is a smooth function of $x$. We consider \textit{asymptotic} solutions ---
it means that the equation \eqref{beqf} holds modulo $O(h^N)$ as $h \to +0$ for any natural $N$, and they are not {\it a priori} related with
exact solutions. 

A survey of the problem is given in \cite{AnDoNaRo3}.
The case when $E$ is not a critical value of the restriction $H|_{\Lambda_0}$ is investigated there (Theorem 4).
In the present paper we study the simplest situation when $E$ is a nondegenerate local extreme value of $H|_{\Lambda_0}$
using another general formula for asymptotic solutions \cite[Theorem 2]{AnDoNaRo3}.

To this end, we use Lagrangian singularity theory finding a normal form of the projection of the solution to the Hamilton--Jacobi equation to the space-energy (Theorem \ref{thm1}). Further, we apply this result from Lagrangian singularity theory to obtain an approximation of an asymptotic solution to the equation \eqref{beqf}
(Theorem \ref{thm2}). We use the results and technique developed in \cite{AnDoNaRo3, DoNa1, NaTo}.


\bigskip

This paper is organized as follows:

\smallskip

In Section 1 we consider the case
where $\E$ is an local extreme value of $H|_{\Lambda_0}$ at a nondegenerate critical point. We consider the Lagrangian submanifold  $\Lambda$
in the cotangent bundle to the space-time, consisting of the solutions to the Hamilton--Jacobi equation starting from the initial submanifold
$\Lambda_0$. We find a normal form of its projection to the space-energy $(x,E)$ up to fibred canonical transformations preserving the critical energy level.
This enters the general classification of \cite{ZaMy}. Our main result is Theorem \ref{thm1}.

In Section 2, using Theorem \ref{thm1}, we approximate the asymptotic solution to the equation (\ref{beqf}) near the glancing trajectory at a
critical energy for $H|_{\Lambda_0}$. Our approximation is expressed in terms of Airy function and its
derivative. This is Theorem \ref{thm2}.

Section 3 is devoted to the two dimensional case, where $\Lambda_0$ is Bessel cylinder, the wave-front set of a radially symmetric solution
to Helmholtz equation $(-h^2\Delta-1)u_h=0$. We consider Hamiltonians positively homogeneous of degree 1, such as the conformal metric
in the cotangent space $H(x,p)=|p|/\rho(x)$. We do not
investigate closely the structure of the asymptotic solution, but content ourselves to describe the phase function parameterizing $\Lambda$
together with its density.

In Section 4 we assume we are given two glancing hypersurfaces $(F,G)$ in $T^*M_0$ at some point $z$. By a Theorem of Melrose, see \cite{Ho}, Vol.\,IV,
we know they are symplectically equivalent to $F=\{x_1=0\}$ and $G=\{g(x,p)=p_1^2-p_2-x_1\}$.
Then we look for
the Lagrangian manifolds $\Lambda$ transverse to $F$ at $z$ and such that
$v_g\in T_z\Lambda$. We obtain a partial classification
of these singularities at the level of second order Taylor expansions, recovering a result from \cite{ZaMy}.
This could be possibly applied to diffraction theory by an obstacle in $\R ^n$.

\section{Glancing trajectories and Lagrangian singularities}

\label{sect1}

\subsection{Glancing trajectories}

Let us consider the \textit{space-time} $\R^{n+1}_\indt$ and its cotangent bundle $T^\ast \R^{n+1}_\indt$ with the following coordinates
$$
(x,t) = (x_1, \dots, x_n, t) \in \R^{n+1}_\indt, \quad (p,E) = (p_1, \dots, p_n, E) \in T^\ast_{(x,t)} \R^{n+1}_\indt.
$$
These coordinates physically means the following: $x$ are space coordinates, $p$ are momenta, $t$ is a time, $E$ is an energy. The canonical $2$-form is
$$
d p \wedge d x - d E \wedge d t = d p \wedge d x + d t \wedge d E,
$$
$- \E$ is dual to $t$, $t$ is dual to $\E$. A smooth Hamiltonian
$$
H : T^\ast \R^n \to \R, \quad x \in \R^n, \quad p \in T^\ast_x \R^n
$$
(which is considered as a function on $T^\ast \R^{n+1}_\indt$) defines the Hamiltonian direction field
$$
\dot{x} = \partial_{p} H(p,x), \quad \dot{p}= - \partial_{x} H(p,x), \quad \dot{\E} = 0.
$$
Let
$$
\Lambda_0 \subset T^\ast \R^n,
$$
be a smooth Lagrangian submanifold, which is called \textit{initial}.
The union $\Lambda$ of all integral curves of the Hamiltonian direction field which pass through the points
$$
z=(p, x, t, E) \in T^\ast \R^{n+1}_\indt \,:\, (p,x) \in \Lambda_0, \; t=0, \; \E = H(p,x)
$$
is a smooth Lagrangian submanifold.

A point $z_0 = (p_0,x_0) \in \Lambda_0$ is called \textit{glancing} if it is critical for the restriction $H|_{\Lambda_0}$
(when non degenerate, such a glancing point is also called a {\it kiss} in the terminology of \cite{ElGr}.)
Equivalently this means
that the Hamiltonian vector field $v_{H_0}$ is tangent to $\Lambda_0$ at $z_0$.

An integral curve of the Hamiltonian direction field passing through the point
$$
(p_0, x_0, 0, E_0) \in T^\ast \R^{n+1}_\indt, \quad \E_0 = H(p_0,x_0)
$$
is called a \textit{glancing} trajectory. All points of a glancing trajectory are critical for the restriction
$$
\E|_\Lambda = H|_\Lambda
$$
with the same critical value.

\subsection{Lagrangian singularities}

Let us consider the \textit{energy-space} $\R^{n+1}_\indE \ni (x,\E)$ and the two natural projections
\begin{gather*}
\pxE: T^\ast \R^{n+1}_\indt \to \R^{n+1}_\indE, \quad z=(p,x,t,\E) \mapsto (x,\E),
\\
\pxt: T^\ast \R^{n+1}_\indt \to \R^{n+1}_\indt , \quad z=(p,x,t,\E) \mapsto (x,t).
\end{gather*}
The fibres of the both projections are Lagrangian and so the latter define the following two Lagrangian mappings
$$
\Lambda \hookrightarrow T^\ast \R^{n+1}_\indt \stackrel{\pxE}{\longrightarrow} \R^{n+1}_\indE,
\quad
\Lambda \hookrightarrow T^\ast \R^{n+1}_\indt \stackrel{\pxt}{\longrightarrow} \R^{n+1}_\indt.
$$
If the second Lagrangian mapping is degenerate at some point of $\Lambda$, i.\,e.
$\mathop{\mathrm{rank}}{d \left[ \pxt |_\Lambda \right] (z)}<n+1$, then the first one is degenerate as well because $t$ is a coordinate on $\Lambda$.

If the first Lagrangian mapping is degenerate at some point outside of the glancing trajectories,
i.\,e. $\mathop{\mathrm{rank}}{d \left[ \pxE |_\Lambda \right] (z)}<n+1$, then the first one is degenerate as well because $E$
is a local coordinate on $\Lambda$ in this case. The first Lagrangian mapping is degenerate at all points of the glancing trajectories.
We investigate its simplest singularity when the second Lagrangian mapping is not degenerate, i.\,e. away from focal points.

Our first main result is Theorem \ref{thm1} giving a normal form of this simplest Lagrangian singularity up to diffeomorphisms preserving the energy level.

\begin{theo}
\label{thm1}
Let the following conditions hold:
\begin{enumerate}
\item
\label{dn1}
the restriction $H|_{\Lambda_0}$ has a local non-degenerate extremum (maximum or minimum) $E_0$ at a point $z_0 = (p_0,x_0)$;

\item
\label{dn2}
$d H (z_0) \neq 0$;

\item
$z_\star = (p_\star, x_\star, t_\star, E_0) \in \Lambda$ is a point of the glancing trajectory $\C$ starting at the point $z_0$;

\item
\label{dn4}
the restriction
$$
\pxt|_\Lambda : \Lambda \to \R^{n+1}_\indt
$$
is a local diffeomorphism of a neighborhood of the point $z_\star$ onto a neighborhood of the point $(x_\star, t_\star)$, i.\,e. $z_*$ is not a focal point.
\end{enumerate}
Then there exists a local canonical diffeomorphism $\gamma$ of $T^\ast \R^{n+1}_\indt$ defined in a neighborhood of the point $z_\star$
and satisfying the following conditions:
\begin{itemize}
\item
$\gamma: z_\star \mapsto 0$;
\item
$\gamma$ sends the fibres of the projection $\pxE$ to each other;
\item
$\gamma$ sends the hypersurface $E=E_0$ to $E=0$;
\item
the Lagrangian submanifold $\Lambda' = \gamma (\Lambda)$ is given by the following equations
$$
S'(x,t) = t^3/3 + (x_2^2 + \dots + x_{n}^2) t,
\quad
\E = - \pd_{t} S',
\quad
p = \pd_x S'
$$
in a neighborhood of the point $\gamma(z_\star) = 0$.
\end{itemize}
\end{theo}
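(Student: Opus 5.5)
By condition \ref{dn4}, $\Lambda$ has a generating function near $z_\star$: $\Lambda=\{p=\pd_x S,\ E=-\pd_t S\}$ for a smooth $S(x,t)$ defined near $(x_\star,t_\star)$. The idea is to normalize $S$ using transformations that keep the fibration of $\pxE$, namely:
\begin{itemize}
\item diffeomorphisms of the $(x,E)$-space preserving $\{E=E_0\}$, lifted to $T^\ast\R^{n+1}_\indt$;
\item adding to the generating function in the $(x,E)$ representation a function $\phi(x,E)$, i.e. fibrewise translations $(p,t)\mapsto(p+\pd_x\phi,\,t-\pd_E\phi)$.
\end{itemize}
These generate all canonical diffeomorphisms respecting $\pxE$.

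It is more convenient to work with the map $\pxE|_\Lambda$ itself, with local coordinates $(x,t)$ on $\Lambda$. The map is $(x,t)\mapsto (x,E(x,t))$, where $E(x,t)=-\pd_t S$. So the problem reduces to normalizing the single function $E(x,t)$ on $\Lambda$ up to changes of coordinates on the target that fix $E=E_0$. The target normal form is $E=-t^2-(x_2^2+\dots+x_n^2)$, a fold-like family of Morse functions.

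The key geometric facts come from the hypotheses.
\begin{enumerate}
\item The critical set of $E|_\Lambda$ is exactly the glancing trajectory $\C$. Indeed $E|_\Lambda$ is constant along trajectories, so in coordinates given by $\Lambda_0\times(\text{time})$ it equals $H|_{\Lambda_0}$. Its critical points therefore form the trajectory through $z_0$, and by condition \ref{dn1} $E|_\Lambda$ is Morse--Bott with nondegenerate transverse Hessian of rank $n$.
\item Condition \ref{dn2} says $v_H\neq0$, so $\C$ is a smooth curve through $z_\star$.
\item Condition \ref{dn4} says $\pxt(\C)$ is a smooth curve in $\R^{n+1}_\indt$ with a nonvanishing tangent.
\item Since $d(\pxE|_\Lambda)$ kills the vector $\pd_t$ along $\C$, the $x$-projection of $\C$ must be a regular curve. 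I will straighten it to be the $x_1$-axis, with $\C=\{x_2=\dots=x_n=0,\ t=\tau(x_1)\}$.
\end{enumerate}

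With these facts the normalization runs as follows.
\begin{enumerate}
\item Apply the Morse lemma with parameters (Morse--Bott lemma) to $E(x,t)-E_0$, viewed as a function of $(x',t)$ with $x'=(x_2,\dots,x_n)$ and parameter $x_1$. This gives $E-E_0=\pm Q(x_1;x',t)$ with $Q$ a sum of squares in new coordinates. I take the sign so that the extremum is a maximum; the minimum case follows by $E\mapsto -E$ and orientation changes.
\item The $x$-coordinates on the target must be changed by diffeomorphisms of $(x,E)$-space. Coordinates on $\Lambda$ are then forced as $(x,t)$ pulled back. So I do not use the Morse lemma freely. Instead I show that one may choose a target diffeomorphism $x\mapsto X(x,E)$, $E\mapsto E_0+\ldots$ so that, after the change, $E$ becomes $-t^2-|x'|^2$ in some coordinate $t$ on $\Lambda$ completing the $x$'s.
\item Writing $t=$ (new fibre coordinate) fixes the generating function up to a function of $(x,E)$. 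The pair $(x,\,t\text{ as a function on }\Lambda)$ gives coordinates on $\Lambda$, and $\Lambda$ in the new symplectic coordinates has $(x,t)$-generating function $S'$ with $-\pd_t S'=-t^2-|x'|^2$. Hence $S'=t^3/3+|x'|^2t+c(x)$, with $p=\pd_x S'$.
\item Remove $c(x)$ by the fibrewise translation $p\mapsto p-\pd_x c$, which preserves $\pxE$ fibres and $E$.
\end{enumerate}

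The main obstacle is step 2, because the coordinate $t$ on $\Lambda$ is not free. After a target diffeomorphism, $t$ is determined by Lagrangianity as the conjugate fibre variable, up to adding $-\pd_E\phi(x,E)$. So the real statement to prove is the following: there exist a diffeomorphism $\Phi$ of $(x,E)$-space fixing $\{E=E_0\}$ and a function $\phi(x,E)$ such that on $\Lambda$ we get $E\circ\Phi=-\tilde t^{\,2}-|x'|^2$ with $\tilde t=t-\pd_E\phi$.

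I would try to reduce this to the classification of Lagrangian maps whose fibred critical set is the fold over a hypersurface, as in \cite{ZaMy}. Concretely, I would write $\Lambda$ via a generating family $F(x,E,t)=S(x,t)+Et$, whose critical equation in $t$ is the Morse--Bott equation for $E-E(x,t)$. Then I would apply the versality (Malgrange preparation) theorem in $t$ to reduce $F$ to $t^3/3+(|x'|^2+(E-E_0)')t+\alpha(x,E)$. Here the coefficient of the linear term can be taken as a new $E$-coordinate vanishing exactly on $E=E_0$ along $\C$; this uses the Morse--Bott nondegeneracy and the fact that $\C$ lies in $\{E=E_0\}$. The leftover $\alpha$ is removed by $\phi$.

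Finally, checking that the cubic term has nonzero coefficient uses condition \ref{dn4}: $\pd_t^2E\neq0$ along $\C$, i.e. $\pd_t^3 S\neq0$ there.
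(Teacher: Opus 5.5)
Your route is essentially the paper's, written in generating-family language (compare Remark~\ref{rfam}). The versal reduction of $S(x,t)+Et$ to $\tilde t^3/3+\lambda(x,E)\tilde t+\alpha(x,E)$ corresponds to the paper's $A_2$ normalization $\gamma_1$. After it one needs a Morse lemma with parameters for the trace of $\{E=E_0\}$, followed by a shear in $E$.

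\textbf{The gap.} There is a genuine gap exactly where condition~2 has to enter: the fold condition $\pd_t^2E(x_\star,t_\star)\neq0$ is never proved. This condition is what makes $t\mapsto S(x_\star,t)+E_0t$ an $A_2$ germ, and so it is what licenses the versality step.

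In item~4 you deduce $\dot X(t_\star)\neq0$ from the fact that $d(\pxE|_\Lambda)$ kills $\pd_t$ along $\C$. That fact also holds for $S=(|x-x_\star|^2-E_0)\,t$, whose critical curve $\{x=x_\star\}$ projects to a point and for which $\pd_t^2E\equiv0$. So it cannot imply the claim.

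What rules this out is $dH\neq0$, which you use only to say that $\C$ is a curve. Differentiate the identity $H(x,\pd_xS)=-\pd_tS=E(x,t)$ in $x$ at $(x_\star,t_\star)$, where $dE=0$. This gives $\pd_xH=-\pd_x^2S\,\pd_pH$ at $z_\star$. Hence $\dot X(t_\star)=\pd_pH(z_\star)=0$ would force $dH(z_\star)=0$, which is impossible because $dH\neq0$ is carried along the flow. This is the paper's argument for $\dot X(t_\star)\neq0$.

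Only then does $\pd_t^2E\neq0$ follow. It also uses that the rank-$n$ Hessian of $E$ in $(x,t)$ is \emph{semidefinite}, with kernel $\R(\dot X(t_\star),1)$ and $(0,1)\notin\R(\dot X(t_\star),1)$. For an indefinite form, $(0,1)$ could be isotropic without lying in the kernel. So your closing attribution of $\pd_t^2E\neq0$ to condition~4 alone is wrong: it needs conditions 1 (extremum), 2 and 4 together.

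\textbf{The last step.} This also has to be turned into an argument. As phrased, it suggests that $\lambda$ itself serves as the new energy coordinate. That cannot send $\{E=E_0\}$ to $\{\tilde E=0\}$, since $\lambda(\cdot,E_0)\not\equiv0$. What is needed:
\begin{itemize}
\item $\pd_E\lambda\neq0$, by comparing the $\tilde t$-linear terms in $\pd_EF=t$.
\item The fibre of $\pxE|_\Lambda$ over $(x,E_0)$ is $\{\tilde t^2=-\lambda(x,E_0)\}$. Since $E|_\Lambda=E_0$ only on $\C$, the function $f(x)=\lambda(x,E_0)$ is $\geq0$ and vanishes exactly on the $x$-projection of $\C$. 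That set is a regular curve by the repaired item~4.
\item In the coordinates $(x,\tilde t)$ on $\Lambda$, one has $E-E_0=-(\tilde t^2+f(x))/\mu$ with $\mu(x_\star,E_0)=\pd_E\lambda(x_\star,E_0)\neq0$. Hence the transverse Hessian of $f$ has rank $n-1$.
\item Apply the Morse lemma with parameters to $f$ on the base. (Your step~1 applies it to $E-E_0$ on $\Lambda$, and that step is never actually used.) This gives coordinates $y(x)$ with $f=y_2^2+\dots+y_n^2$.
\item Then $\tilde E=\lambda-(y_2^2+\dots+y_n^2)$ vanishes exactly on $\{E=E_0\}$ and satisfies $\pd_E\tilde E\neq0$.
\end{itemize}
This matches the paper's Morse-lemma step on $\gamma_1(\{E=E_0\})$ followed by the lift of $(x,E)\mapsto(x,E-x_2^2-\dots-x_n^2)$. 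With both repairs your proof is complete.
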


\begin{remark}
\label{rem1}
Any local canonical diffeomorphism satisfying the first three conditions of Theorem \ref{thm1} is written as
$$
\gamma: (p,x,t,\E) \mapsto \bigl( \np(p,x,t,\E),  \, \nx(x,\E), \, \nt(p,x,t,\E), \, \nE(x,\E) \bigr),
$$
where
$$
d \np \wedge d \nx + d \nt \wedge d \nE = d p \wedge d x + d t \wedge d \E,
$$
$$
\np(z_\star)=0,  \quad \nt(z_\star)=0, \quad \nx(x_\star,\E_0) = 0, \quad \nE(x,\E_0) = 0.
$$
It is well known (see e.\,g. \cite[18.5]{AVG}, \cite[(3.1)]{DoNa1})
that any such a diffeomorphism is affine in the fibres of $\pxE$ and uniquely defined by some smooth function $W(x,\E)$ and the changes
$\nx(x,\E)$, $\nE(x,\E)$:
\begin{equation}
\label{lift}
(\np,\nt) = (p,t)
\left(
\begin{array}{cc}
\pd_x \nx & \pd_\E \nx
\\
\pd_x \nE & \pd_\E \nE
\end{array}
\right)^{-1}
+ \bigl( \pd_x W(\nx, \nE),\pd_\E W(\nx, \nE) \bigr)
\end{equation}
\end{remark}

\begin{ex}
  
\noindent 1) $\Lambda_0$ is the vertical plane. Consider $\Lambda_0=\R ^n_p$ and $H(x,p)=p^2-x_1$ (Stark effect),
      $H|_{\Lambda_0}=p^2$ has a non degenerate maximum for $E=0$. The trajectories
  issued from $x=0$ are given by $x_1=t^2+2\eta_1t, x'=2\eta't, p_1=t+\eta_1, p'=\eta'$, where $\eta=(\eta_1,\eta')\in \R ^n$. We have $H|_\Lambda=\eta^2$.
  On the glancing trajectory
  $x_1=t^2, x_2=0, p_1=t, p'=0$, and $H=0$.  See \cite{AnDoNaRo3}, Sect.4 for the non-glancing case $\eta_1>0,\eta'=0$.

\noindent 2) $\Lambda_0$ is ``horizontal''. Let
\begin{gather*}
H = -p_1, \quad S_0(x) = x_1^3/3 + x_1(x_2^2 + \dots + x_n^2), \quad z_0=0,
\\
\Lambda_0 = \left\{ p = \partial_x S_0 \right\} = \left\{ p_1 = x_1^2 + \dots + x_n^2, \; p_2 = 2 x_1 x_2, \; \dots, \; p_n = 2 x_1 x_n  \right\}.
\end{gather*}
Then
\begin{gather*}
S(x,t) = (x_1+t)^3/3 + (x_1+t)(x_2^2 + \dots + x_n^2),
\\
\Lambda= \left\{ p_x = \partial_x S, \; \E = - \partial_t S  \right\}.
\end{gather*}
The trajectory starting at $z_0=0$ is  $p=0$, $x_1=-t$, $x_2=\dots=x_n=0$, $\E=0$; let $z_\star$ be its point for $t=t_\star$. Then the canonical transformation
$$
(p,x,t,E) \mapsto (p_1+\E, p_2, \dots, p_n, x_1+t_\star, x_2, \dots, x_n, t+x_1, \E)
$$
gives
$$
S'(x,t) = t^3/3 + (x_2^2 + \dots + x_{n}^2) t
$$
according to Theorem \ref{thm1}. Both examples are actually closely related by a canonical transformation. 
\end{ex}

\begin{remark} 
Our normal form of Theorem \ref{thm1} coincides with one of the normal forms
given in \cite[Theorem 2]{ZaMy} with respect to another equivalence group consisting of all canonical diffeomorphisms preserving the function $E$.
In the present paper we preserve the projection $\pxE$ together with one fixed level of $E$,
and get only one normal form of \cite[Theorem 2]{ZaMy} due to the condition that $E_0$ is
a local degenerate maximum or minimum of $H|_{\Lambda_0}$. Otherwise minus signs may appear in the expression $x_2^2 + \dots + x_{n}^2$.
\end{remark}

\begin{remark} 
\label{rfam}
In terms of generating families (phase functions) --- see e.\,g. \cite{AVG, Iv}, or Subsection \ref{genfam} of the present paper ---
Theorem \ref{thm1} is reformulated in the following way. The Lagrangian submanifold $\Lambda$ in a neighborhood of the point $z_\star$
is defined by a generating family
$$
\Phi(\theta, x, \E) = S(x,\theta) + \E \theta, \quad \Lambda = \left\{ \pd_\theta \Phi =0, \; p=\pd_x \Phi, \; t=\pd_\E \Phi\right\},
$$
and there exist a smooth reversible change of coordinates
$$
(\theta,x,\E) \mapsto \bigl( \theta'(\theta,x,\E)),  \, \nx(x,\E), \, \nE(x,\E) \bigr)
$$
and a smooth function $W(x, \E)$ such that
\begin{itemize}
\item
$\theta'(t_\star, x_\star, \E_0) = 0, \quad \nx(x_\star,\E_0) = 0, \quad \nE(x,\E_0) = 0$;
\item
$\Phi(\theta,x,\E) = \Phi'(\theta', \nx,\nE)$ in a neighborhood of the point $z_\star$ where
$$
\Phi'(\theta, x, \E) = S'(x,\theta) + \E \theta - W(x,\E), \;\; S'(x,\theta) = \theta^3/3 + (x_2^2 + \dots + x_{n}^2) \theta.
$$
\end{itemize}
\end{remark}

\subsection{Proof of Theorem \ref{thm1}}

Let us recall some basic facts from Singularity Theory, see e.g. \cite{AVG}, \cite{Dui1}, Sect.2.
  We start to guess at a convenient
  form of the generating function (or generating family)
  $\Phi(\theta, x, \E)$ ($\theta\in{\bf R}^k$ being some ``phase variables'') that we take diffeomorphically to a ``normal form''
  preserving the critical set in the sense of Remark \ref{rfam}. More precisely, let $\Omega_{x,E}\subset\R^{n+1}_\mathrm{E}$
  and $\Omega_\theta\subset\R ^k$) be open sets, and $\Phi:\Omega_\theta\times \Omega_{x,E}\to \R$ a smooth function. Similarly, let
  $\Phi':\Omega_\theta\times \Omega_{x,E}\to \R$. 
  The functions $\Phi(\theta,x,E)$ and $\Phi'(\theta',x',E')$ are said
  {\it equivalent unfoldings as functions} on $\Omega_\theta$, which we denote by $\Phi'\sim\Phi$, iff there exists
  a diffeomorphism $\chi:\Omega_\theta\times \Omega_{x,E}\to \Omega_\theta\times \Omega_{x,E}$,
  $(\theta,x,E)\mapsto (\theta'(\theta,x,E),x'(x,E),E'(x,E))$ and $\psi\in C^\infty(\Omega_{x,E})$ such that
  \begin{equation}\label{11}
    \Phi'\circ\chi=\Phi+\psi
    \end{equation}
  (we stress that $\psi$ should not depend on $\theta$). In our case we shall also require that $\chi$ preserves the energy set,
  namely $E'(x,E)=E$.
  
  A function $\Phi\in C^\infty(\Omega_\theta\times \Omega_{x,E})$ is called a {\it stable unfolding} if there exists a neighborhood $U$ of $\Phi$ in
  $C^\infty(\Omega_\theta\times \Omega_{x,E})$ (for the Whitney $C^\infty$ topology) such that $\Phi'\sim\Phi$ for every $\Phi'\in U$.

  The first step in the study of stability of the unfolding is its reduction to a tranversality condition in the space of jets (Taylor expansions).
  We then pass from the level of jets  to this of smooth functions through the {\it deformation method}
  solving homological equations. 

  Unfolding of Lagrangian singularities requires that all such diffeomophisms preserve the symplectic structure.
  Such normal forms have been extensively studied, see \cite{AVG}, \cite{Dui1}, Sect.2, \cite{ZaMy}. See also \cite{KaRo} for the special matrix case
  of a hyperbolic umbilic.\\

  Let us now proceed to the proof of Theorem 1.1.
  In the present case, the Lagrangian singularity is a {\it fold} for the projection $\pi_E|_\Lambda:\Lambda\to \R ^{n+1}_E$.
 Most of our generating families induce only changes of variables in $(x,E)$, see e.g. Remark \ref{rem1}. 

Any point of the trajectory $\C$ is a local (non-strict) extremum of the restriction
$$
\E|_\Lambda = H|_\Lambda
$$
with the same extremum value $E_0$  because $H(g^t z) = H(z)$ for all $z \in T^\ast \R^n$.
All these critical points are degenerate of rank $n$ according to the condition \ref{dn1} of Theorem \ref{thm1}.
The condition \ref{dn4} of Theorem \ref{thm1} implies that $(x,t)$ are local coordinates on $\Lambda$ in a neighborhood of the point $z_\star$.
Let us consider the function
\begin{equation*}
\label{restr}
R (x,t) = \E|_{\Lambda}
\end{equation*}
which attains its local maximum $E_0$ along the curve
\begin{equation*}
\label{cX}
\pxt(\C)= \left\{ (x,t) \in \R^{n+1}_\indt \; | \;  x = X(t) \right\} \ni (x_\star,t_\star).
\end{equation*}
Its second differential $d^2 R(x_\star,t_\star)$ is semi-definite and has rank $n$ due to the condition \ref{dn1} of Theorem \ref{thm1}.

Our next observation is that
$$
\dot{X} (t_\star) = \pd_p H(z_\star)  \neq 0.
$$
Indeed, the trajectory $\C$ of the Hamiltonian vector field does not contain singular points
because $z_0$ is not singular by the condition \ref{dn2} of Theorem \ref{thm1}. So $d H(z_\star) \neq 0$ and the kernel of
$d H (z_\star)$ is tangent to $\Lambda$ because ${z_\star}$ is critical for the restriction $H|_\Lambda$.
Hence, this kernel is transversal to the plane $t=t_\star$, $x= x_\star$ due to the condition \ref{dn4} of Theorem \ref{thm1}.
Therefore, $\pd_p H({z_\star}) \neq 0$.

Hence,
$$
\pd_t^2 R(x_\star,t_\star) \neq 0.
$$
Indeed, $d^2 R(x_\star,t_\star)$ is semi-definite of rank $n$, but the vector $( \dot{X} (t_\star), 1 ) \neq (0, 1 )$
belongs to its kernel because $R(X(t),t) \equiv E_0$  (and hence, the vector $(0,1)$ does not belong to the kernel).
Therefore, in the coordinates $(x,t)$ on $\Lambda$
$$
\pxE |_\Lambda: \Lambda \to \R^{n+1}_\indE, \quad (x,t) \mapsto (x,R(x,t)),
$$
and the restriction $\pxE |_\Lambda$ is a fold in a neighborhood of the point $z_\ast$.


Let $\pxE(\Lambda, z_\star)$ be the image of the germ of $\Lambda$ at the point $z_\star$.
It is the germ at $\pxE(z_\star) = (x_\star,E_0)$ of a smooth $(n+1)$-dimensional submanifold with boundary.
The hypersurface $\Alg = \{E=E_0\} \subset \R^{n+1}_\indE$ is tangent to $\pxE(\Lambda, z_\star)$ with rank $n-1$ along the curve germ $\pxE(\C,z_\star)$.
The latter is smooth because $\dot{X}(t_\star) \neq 0$.

Besides, the restriction $\pxE |_\Lambda$ has Lagrangian singularity $A_2$ at $z_\star$ because it is a fold --- see e.\,g. \cite[21.3]{AVG}. So there exists a canonical local diffeomorphism
$$
\gamma_1: (p,x,t,\E) \mapsto \bigl( \np_1(p,x,t,\E),  \, \nx_1(x,\E), \, \nt_1(p,x,t,\E), \, \nE_1(x,\E) \bigr), \quad z_\ast \mapsto 0
$$
reducing $\Lambda$ to its local normal form $A_2$
$$
\Lambda_1 = \gamma_1(\Lambda) = \{E=-t^2, p=0\}.
$$
Here $\nE_1(x_\star,\E_0) = 0$ but $\nE_1(x,\E_0) \neq 0$ and the proof is not finished yet.

In order to finish the proof let us observe that the diffeomorphism $\gamma_1$ sends $\pxE(\Lambda, z_\star)$ to $\pxE(\Lambda_1, 0) = (E \leqslant 0, 0)$,
and the hypersurface $\Alg$ to a hypersurface $\Alg_1$ which is tangent to $\pxE(\Lambda_1, 0)$ with rank $n-1$ along a smooth curve germ.
According to the Morse lemma with parameters \cite[11.1]{AVG} there exists a local diffeomorphism
$$
(x,E) \mapsto (\nx_2(x),E), \quad 0 \mapsto 0
$$
reducing $\Alg_1$ to its local normal form $\Alg_2 = \{ E = x_2^2 + \dots + x_n^2\}$. Therefore, the canonical lift $\gamma_2$
of this diffeomorphism according to the formula \eqref{lift} with $W \equiv 0$ preserves $\Lambda_1$ and sends $\Alg_1$ to $\Alg_2$. At last, the canonical lift $\gamma_3$ of the diffeomorphism
$$
(x,E) \mapsto (x,E - x_2^2 - \dots -x_n^2)
$$
according to the formula \eqref{lift} with $W \equiv 0$ sends $\Lambda_1$ to $\Lambda'$ and $\Alg_2$ to $E=0$.
Therefore, the composition $\gamma_3 \circ \gamma_2 \circ \gamma_1$ of local canonical diffeomorphisms satisfies the conditions of Theorem \ref{thm1}.

\section{Glancing trajectories and asymptotic solutions}

\label{sect2}

\subsection{Approximations of asymptotic solutions}

We start to outline our framework of $h$-$\Psi$DO's and Maslov canonical operator.

Let $M_0={\bf R}^n$, or possibly a smooth manifold. For $m,k\in{\bf Z}$, 
we recall the usual class, writing $\langle\theta\rangle=(1+\theta^2)^{1/2}$,
\begin{equation}\label{1.1}
  S^k_m(M_0\times {\bf R}^N)=\{a(x,\theta;h)\in C^\infty: |\partial_{x}^\alpha\partial_\theta^\beta a(x,\theta;h)|\leq
  C_{\alpha,\beta}h^k \langle \theta\rangle^{m-|\beta|}\}
  \end{equation}
with asymptotic expansion $a(x,\theta;h)\sim h^k\bigl(a_0(x,\theta)+ha_1(x,\theta)+\cdots\bigr)$, see e.g. \cite{Iv}.
We call $a$ an {\it amplitude}, and shall sometimes consider amplitudes compactly supported in $\theta$. 
The space of {\it symbols} $H(x,p;h)$, $z=(x,p)\in T^*M_0$, or possibly $z=(x,p)\in T^*M_0\setminus 0$, with $p$ as a ``$\theta$-variable'',
is defined analogously, and allow Hamiltonians 
$H=H(x,p)$ positively homogeneous of degree $m$ with respect to $p$. 
We often omit the notation $S^k_m(\cdots)$ and
refer simply to ``amplitudes'' or ``symbols'' whenever this is clear from the context. 

Let 
$H(x,p;h)\in S^0_m(T^*M_0)$ so that $H(x,p;h)$ has the semi-classical expansion
\begin{equation}\label{1.2}
H(x,p; h)\sim H_0(x,p)+\,hH_1(x,p)+\cdots,\ h\rightarrow 0.
\end{equation}
We call as usual $H_0$ the principal symbol, and $H_1$ the sub-principal symbol.
We also assume that $H+i$ is elliptic.

This allows to take suitable quantizations of $H$, e.g. Weyl quantization
\begin{equation}\label{1.3}
H^w(x,hD_x;h)u(x;h)=(2\pi h)^{-1}\,\int\int e^{\frac{i}{h}\,(x-y)\,\eta}\,H\big(\frac{x+y}{2},\eta;h\big)\,u(y)\,dy\,d\eta,
\end{equation}
For short we write $\widehat H=H^w(x,hD_x;h)$. 

We investigate asymptotic solutions to the equation
\begin{equation}
\label{beq}
\bigl( \widehat{H} - \E \bigr) u_h (x,\E) = f_h(x), 
\end{equation}
\begin{equation*}
\widehat{H} = H (\widehat{p},x), \quad  H \in C^\infty( T^\ast \R^n), \quad \widehat{p}= -i h \partial_x, \quad x \in \R^n, \quad \E \in \R
\end{equation*}
where $H(x,p;h)\in S^0_m(T^*M_0)$, the energy $E$ is considered as a parameter
of the problem, and the right hand side $f_h$ is microlocalized on a smooth Lagrangian submanifold $\Lambda_0$, i.\,e.
$$
f_h (x) = \left[\mathcal{K}_{\Lambda_0} (a) \right] (x), \quad a: \Lambda_0 \to \R
$$
$\mathcal{K}_{\Lambda_0}$ is the Maslov canonical operator on $\Lambda_0$, and the amplitude $a$ is a smooth function with compact support.

Locally, $f_h$ is a Lagrangian semi-classical distribution of the form 
\begin{equation}\label{1.4}
  f_h(x)=(2\pi h)^{-N/2} \int_{{\bf R}^N} e^{i\varphi(x,\theta)/h}a(x,\theta;h)\,d\theta
  \end{equation}
$\varphi$ is a non-degenerate phase function in the sense of H\"ormander
(or a generating family in the sense of Arnold) defining the Lagrangian manifold $\Lambda_0$, and 
$a\in S^k_0(M_0\times{\bf R}^N)$ for some $k$. We also assume here that $a$ is compactly supported in $\theta$. 
Here are some examples:

\begin{equation}
\label{1.5}
f_h (x) = e^{-\pi i n / 4} h^{-\frac{n}{2}} \tilde{a} \left(\frac{x}{h} \right), \quad \Lambda_0 = \left\{ x = 0 \right\}
\end{equation}
Here $\tilde{a}$ is the unitary Fourier transform of the amplitude $a(p)\in S^0_0({\bf R}^n)$, i.e.
\begin{equation*}
e^{\pi i n / 4}   f_h (x) = {\cal F}_h^*a(x)=(2\pi h)^{-\frac{n}{2}} \int_{\R ^n} e^{ip(x-x_0)/h} a(p)\,dp=h^{-\frac{n}{2}} {\cal F}_1^*a(\frac{x}{h})
\end{equation*}
where ${\cal F}_h$ denotes (unitary) Fourier transform, and ${\cal F}_h^*$, defined by  ${\cal F}_h^*u(p)={\cal F}_h^*u(-p)$
its adjoint.
In this case $\Lambda_0=T^*_{x_0} \R^n$ is the vertical plane $x=x_0$ in the cotangent bundle $T^\ast \R^n$.
For simplicity we shall often assume $x_0=0$.

We can replace $a(p;h)$ in (\ref{1.5}) by a more general amplitude $a(p,x;h)$,
but it is known (\cite{Ho} Lemma 18.2.1) that we can eliminate the $x$ variable, without changing 
$f_h(x)$ mod ${\cal O}(h^\infty)$ 
with a new amplitude $b(p;h)$. Amplitude $a(p;h)$ in (\ref{1.5})
will be generally assumed to be compactly supported in $p$. Symbols like $a(p)=1$ are too singular from our point of view.

We can also take WKB functions in Fourier representation of the form
$f_h(x)=\int^* e^{i(xp+S(p))/h}a(x,p;h)\,dp$, 
and $\Lambda_0=\{(-\partial_pS(p),p): p\in {\bf R}^n\}$, which we can also call ``vertical''. 
Here $\int^*$ is a shorthand for $(2\pi h)^{-\frac{n}{2}}\int_{\R ^n}$, omitting phase normalization factors as $e^{-\pi i n / 4}$.

In contrast WKB symbols in the position representation of the form  
\begin{equation}\label{1.7}
f_h (x) = a(x) \exp{\frac{i S_0(x)}{h}}, \quad \Lambda_0 = \left\{ p = \pd_x S \right\}.
\end{equation}
are microlocalized  on a manifold $\Lambda_0$ we may call ``horizontal''. 

The solution to the equation \eqref{beq} is given by
\begin{equation}\label{1.12}
  u_h(x,E)= \frac{i}{h} \int_0^\infty e^{-it(\widehat{H}-\E)/h}f_h(x)\,dt
\end{equation}
where the integral makes sense after replacing
$$
e^{-it(\widehat{H}-\E)/h} \mapsto e^{-it(\widehat{H}-\E)/h-t\ve/h}
$$
and taking the limit $\ve\to0^+$,
so to fulfill Sommerfeld radiation condition (whenever it applies, see e.\,g. \cite{AnDoNaRo3}).

In this work, we content ourselves with formal asymptotic solutions, which
are not {\it a priori} related with
exact solutions. See \cite{Bon,Ca,KlCa} for a more complete study involving Sommerfeld radiation condition, when $\widehat{H}$
is Helmholtz operator in two dimensions, and $f_h(x)$ is localized at some point, i.e. of the form (\ref{1.5}).
A survey of the problem in the framework of Maslov canonical operator
is given in \cite{AnDoNaRo3,Ro} when $\Lambda_0$ is the vertical plane, i.e. $f_h$ of the form (\ref{1.5}).

\bigskip

Following \cite{AnDoNaRo3}, we consider:
\begin{assumption}
\label{assum}
\phantom{.}
\begin{enumerate}
\item
\label{cnd1}
The amplitude $a$ has a compact support $\supp{a} \subset \Lambda_0$.
\item
\label{cnd2}
The phase flow $g^t$ of the Hamiltonian vector field $v_H$ generated by $H$
\begin{equation*}
g^t: T^\ast \R^n \to T^\ast \R^n, \quad \dot{x} = \partial_{p} H(p,x), \quad \dot{p}= - \partial_{x} H(p,x)
\end{equation*}
is defined for all $z \in \Lambda_0$, $t \ge 0$.
\item
\label{cnd3}
For any $R > 0$, there exists $t_R$ such that $\bigl| \pi \bigl( g^t(z) \bigr) \bigr| > R$ for all $t > t_R$ and $z \in \Lambda_0$, where
$$
\pi : T^\ast \R^n \to \R^n, \quad (p,x) \mapsto x
$$
is the natural projection.
In particular, all points of $\Lambda_0$ are not singular for the Hamiltonian vector field $v_H$: $d H (z_0) \neq 0$ for all $z_0 \in \Lambda_0$.
\end{enumerate}
\end{assumption}

According to \cite{AnDoNaRo3} Assumptions \ref{assum} imply that
\begin{equation}
\label{int}
u_h(x,\E) =
\frac{i}{h} \int\limits_{0}^{+\infty} [\mathcal{K}_\Lambda (A_h)](x,t) \, \exp{ \left\{ \frac{i \E t}{h} \right\} } \, d t
\end{equation}
is an \textit{asymptotic} solution to the equation \eqref{beq} where $\mathcal{K}_{\Lambda}$ is the Maslov canonical operator on $\Lambda$,
\begin{equation}
\label{transp}
A_h : \Lambda \to \R, \quad \Pi A_h = 0, \quad A_h|_{\Lambda_0} = a
\end{equation}
is the solution depending on the parameter $h$ to the transport equation $\Pi A_h = 0$, $\Pi$ is the transport operator on $\Lambda$ --- see \cite[5.3, A.5]{AnDoNaRo3} for details. It means that
\begin{equation*}
\bigl( \widehat{H} - \E \bigr) u_h (x,\E) = f_h(x) + \O(h^\infty), \quad h \to +0
\end{equation*}
where
$$
\O(h^\infty) = \O(h^N), \quad \forall \; N > 0, \quad h \to +0.
$$

Let us make a few comments on the integral representation (\ref{int}). For simplicity we focus here on the case where $\Lambda_0$
is the vertical plane, following \cite{AnDoNaRo3}.

Define $L_0=\{H_0(x_0,p)=E_0\}\subset\Lambda_0$ as the
set of initial data.
There are generally two contributions to $u_h(x;E)$ for $E$ near $E_0$. 
The set  $\supp a\times \{0\}$ contributes to its ``boundary part'',
while $L_0\times[0,+\infty[$ contributes to its ``wave part'' (or ``far-field''). Indeed, if $\supp a\times L_0=\emptyset$,
    $u_h(x;E)$ reduces to its boundary part, i.e.
    $$u_h(x;E)=ih^{(n/2)-1}\int_0^\infty[{\cal K}_0B](x,t;h)\,dt$$
    where $B$ is an amplitude supported on
    $\Lambda_0$, see \cite{AnDoNaRo3}, Theorem 1. This property reflects the well-known relation between frequency sets~:
    $\WF _hu_h\subset\WF _h (\widehat H u_h)\cup \{H_0(x,p)=E(p)\}$

An important case is when $\partial_p H_0(x,p)\neq0$ on $L_0$ at some point $z_0=(x_0,p_0)$, or equivalently when Hamilton vector field $v_{H_0}$
is transverse to $\Lambda_0$ at $z_0$. We say that we are in the case of {\it Lagrangian intersection}, or that
$z_0$ is a {\it non-glancing point}. Then $L_0$ is (locally) a smooth $n-1$ dimensional manifold,
parametrized by $\alpha\in{\bf R}^{n-1}$, and so is $L_E$ for $E-E_0$ small enough. 
Let $\Lambda_+(E)$ be the its flow out through $v_{H_0}$. 
It is known that the map $\iota_+:\Lambda_+(E)\to{\bf R}^{2n}_{x,p}$ is a Lagrangian embedding when $T^*\R ^n$ is endowed with its
natural symplectic structure with measure $dp\wedge\,dx$. 

The basic result in this case (see \cite{MelUh}, based on earlier statements in \cite{DuiHo})  is that $\widehat H$ (microlocally near $z_0$) can be taken to the normal form $hD_{x_n}$,
in such a way that $\Lambda_0$ (the vertical plane) is preserved. 
So let $\theta_T\in C^\infty({\bf R}_+)$ vanishing near $+\infty$ and $\theta_T(t)=1$ for $t\leq T$. Conditions 
imply that, in suitable coordinates, the solution of (\ref{beq}) in the form (\ref{1.12}) can be written as 
\begin{equation}\label{1.15}
  u_h(x)={i\over h}\int_0^\infty\theta_T(t)\,dt\int^* e^{i(x'\xi'+(x_n-t)\xi_n)/h}e^{itE/h} a(\xi;h)\,d\xi
  \end{equation}
Namely, provided $x_n\leq T/2$ (say) an integration by parts and a non-stationary phase argument in variables
$(t,\xi_n)$ show that
$$(hD_{x_n}-E)u_h(x)=f_h(x)+{\cal O}(h^\infty)$$
Given $x_n\neq0$, integrations by parts in $\xi_n$ show that an arbitrary small neighborhood of $t=x_n$ contributes to (\ref{1.15})
mod ${\cal O}(h^\infty)$. 
Note that the solution $u_h(x,E)$ of $(hD_{x_n}-E)u_h(x,E)=f_h(x)+{\cal O}(h^\infty)$ instead is simply obtained by sticking in the factor
$e^{iEt/h}$ into the integral (\ref{1.15}).
\begin{remark}
  If we write the initial amplitude in the form $a(x',x_n,\xi)$ (without eliminating $x$)
  we get instead
  \begin{equation*}
  u_h(x)={i\over h}\int_0^\infty\theta_T(t)\,dt\int^* e^{i(x'p'+(x_n-t)p_n)/h}e^{itE/h}a(x',x_n-t,p;h)\,dp
  \end{equation*}
 \end{remark} 
In \cite{Ro} Theorem 1.2, we make this representation more precise, in terms of Maslov {\it bi-canonical operator}.\\

Representation (\ref{1.15}), as well as 
Theorem 4 of \cite{AnDoNaRo3} are not applicable in our situation because the principal type condition from
\cite{AnDoNaRo3} fails at glancing points.

So our second main result is Theorem \ref{thm2} giving an approximation for the asymptotic solution \eqref{int}
via the Airy function and its derivative in the coordinates of Theorem \ref{thm1} like it is done in \cite{DoNa1, NaTo}
for other Lagrangian singularities.

\begin{remark}
  Formula (2.8) from \cite[Theorem 2]{AnDoNaRo3} contains a cutoff function $\chi$ needed for convergence of the integral (2.8).
  But our condition \ref{cnd3} from Assumptions \ref{assum} is stronger that condition III from \cite[2.1]{AnDoNaRo3}
  and implies that the integral (2.8) from \cite[Theorem 2]{AnDoNaRo3} is well defined even if $\chi \equiv 1$.

  Formula (2.8) from \cite[Theorem 2]{AnDoNaRo3} is proved there if $E=0$ and $\Lambda_0$ is vertical.
  But its generalization for the case $E \neq 0$ is obvious.
  The case of an arbitrary Lagrangian submanifold $\Lambda_0$ does not have any obstacles as well --- see \cite[2.6.\,Discussion,\,4]{AnDoNaRo3}.

  But despite the above arguments it is not proved that the formula \eqref{int} gives an asymptotic solution in the described generality.
  So we formulate our Theorem \ref{thm2} for the integral \eqref{int} but not for an asymptotic solution of the equation \eqref{beq}.
\end{remark}

\begin{remark}
  A model problem when the condition \ref{cnd3} from Assumptions \ref{assum} fails is investigated in \cite{BoDoTo}.
\end{remark}

\begin{theo}
\label{thm2}
Let
\begin{equation*}
u_h(x,\E) =
\frac{i}{h} \int\limits_{0}^{+\infty} [\mathcal{K}_\Lambda (A_h)](x,t) \, \exp{ \left\{ \frac{i \E t}{h} \right\} } \, d t
\end{equation*}
be the function defined by \eqref{int}, \eqref{transp}, and the conditions \ref{cnd1}--\ref{cnd3} from Assumptions \ref{assum} hold with the following ones:
\begin{enumerate}
\item
The restriction $H|_{\Lambda_0}$ attains its a local non-degenerate extremum $E_0$ at a point $z_0 = (p_0,x_0)$.


\item
$z_\star = (p_\star, x_\star, t_\star, E_0) \in \Lambda$, $t_\star > 0$ is a point of the glancing trajectory $\C$ starting at the point $z_0$.

\item
\label{Cnd3}
The restriction
$$
\pxt|_\Lambda : \Lambda \to \R^{n+1}_\indt
$$
is a local diffeomorphism of a neighborhood of the point $z_\star$ onto a neighborhood of the point $(x_\star, t_\star)$.

\item
\label{Cnd4}
From the half-space $t>0$ the only point $z_\star$ is sent to the point $(x_\star, \E_0)$ by the  restriction
$$
\pxE|_{\Lambda } : \Lambda \to \R^{n+1}_\indE.
$$

\item
\label{Cnd5}
$x_\star \notin \pi(\supp{a})$.
\end{enumerate}
Then
in some local smooth coordinates
$$
x=X(y), \quad y=(y_1,\dots, y_n)
$$
in a neighborhood of the point $x_\star$:
\begin{multline*}
u_h \bigl( X(y), E_0 \bigr) = {}
\\
\left[ h^{-\frac23} b_h(y) \Ai \left( h^{-\frac23} r^2 \right)  + h^{-\frac13} c_h(y) \Ai' \left( h^{-\frac23} r^2 \right) \right] \exp{\frac{i w(y)}{h}}  +
\O(h^\infty),
\\
r^2 = y_2^2 + \dots + y_n^2, \quad h \to +0
\end{multline*}
where $w$, $b_h$, $c_h$ are smooth functions, $\Ai$ is the Airy function, and $\Ai'$ is its derivative.
\end{theo}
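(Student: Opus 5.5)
The plan is to localize the integral \eqref{int}, recognize it near $(x_\star,E_0)$ as the Maslov canonical operator on $\Lambda$ written in the $(x,E)$-representation, and then use the normal form of Theorem \ref{thm1} (in the version of Remark \ref{rfam}) to reduce it to an oscillatory integral with the cubic phase $\theta^3/3+(x_2^2+\dots+x_n^2)\theta+E\theta$.

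\textbf{Step 1: localization.} Fix $E=E_0$ and $x$ near $x_\star$. By condition \ref{Cnd5}, the boundary contribution from $t=0$ is $\O(h^\infty)$. Indeed, after a cutoff in $t$ near $0$, the integrand is microlocalized near $\supp a$, whose projection avoids a neighborhood of $x_\star$. Assumption \ref{cnd3} confines the relevant $t$ to a compact interval. On it, $\mathcal{K}_\Lambda(A_h)(x,t)e^{iE_0t/h}$ is a Lagrangian distribution in $(x,t)$. Integrating in $t$ is a pushforward along $t$, i.e. the semiclassical Fourier transform in $t$ evaluated at the fibre $E=E_0$. Nonstationary phase in $t$ shows that only points of $\Lambda$ over $(x,E_0)$ contribute modulo $\O(h^\infty)$. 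By condition \ref{Cnd4}, together with properness coming from \ref{cnd3}, these lie near $z_\star$ when $x$ is near $x_\star$. So we may insert a cutoff $\chi(x,t)$ supported near $(x_\star,t_\star)$.

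\textbf{Step 2: write the integral as a generating-family integral.} By condition \ref{Cnd3}, $(x,t)$ are coordinates on $\Lambda$ near $z_\star$. Hence $\mathcal{K}_\Lambda(A_h)=\alpha_h(x,t)e^{iS(x,t)/h}$ with $\Lambda=\{p=\pd_xS,\ E=-\pd_tS\}$ and $\alpha_h$ a smooth classical amplitude. The localized integral is then
\begin{equation*}
\frac{i}{h}\int \alpha_h(x,\theta)\chi(x,\theta)\, e^{i\Phi(\theta,x,E_0)/h}\,d\theta, \qquad \Phi(\theta,x,E)=S(x,\theta)+E\theta,
\end{equation*}
which is exactly the generating family of Remark \ref{rfam}.

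\textbf{Step 3: change variables by Theorem \ref{thm1}.} Remark \ref{rfam} gives $\Phi(\theta,x,E)=\Phi'(\theta',x',E')-W$ with $E'(x,E_0)=0$. Set $y=x'(x,E_0)$, which is a diffeomorphism near $x_\star$, so $x=X(y)$. Change the integration variable $\theta\mapsto\theta'$, absorbing the Jacobian into the amplitude. Then, modulo $\O(h^\infty)$,
\begin{equation*}
u_h = \frac{i}{h}\, e^{iw(y)/h}\int \beta_h(y,\theta)\, e^{i(\theta^3/3+r^2\theta)/h}\,d\theta, \qquad w(y)=-W(y,0).
\end{equation*}

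\textbf{Step 4: reduce to Airy and Ai$'$.} Apply the Malgrange preparation (division) theorem in $\theta$ with respect to $\pd_\theta(\theta^3/3+r^2\theta)=\theta^2+r^2$. This gives $\beta_h=b^0(y)+\theta c^0(y)+(\theta^2+r^2)q$. Integrate the remainder term by parts, which lowers it by a power of $h$, and iterate to obtain asymptotic series for $b_h$ and $c_h$, Borel-summed. Finally, the scaling $\theta=h^{1/3}s$ turns $\int e^{i(\theta^3/3+r^2\theta)/h}d\theta$ into $2\pi h^{1/3}\Ai(h^{-2/3}r^2)$, and the $\theta$-moment into a multiple of $h^{2/3}\Ai'(h^{-2/3}r^2)$. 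Combined with the prefactor $i/h$, this yields the powers $h^{-2/3}$ and $h^{-1/3}$ in the statement.

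\textbf{Main obstacle.} I expect the hardest step to be Step 4. The cutoff $\chi$ is not compatible with the exact division, so one must check that the terms produced where $\chi'\neq0$ are $\O(h^\infty)$; this follows because $\theta^2+r^2$ does not vanish there. One must also check that the iterated integration by parts genuinely gains $h$ at every stage.
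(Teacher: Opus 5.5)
Your proposal is correct and follows the paper's proof essentially step by step. You localize in $t$ using condition~\ref{cnd3} of Assumptions~\ref{assum} and Conditions~\ref{Cnd4}--\ref{Cnd5}, and rewrite the integral with the generating family $S(x,\theta)+E\theta$. You then change variables via Theorem~\ref{thm1} and Remark~\ref{rfam}, divide the amplitude by $t^2+r^2$, and finish with iterated integration by parts, Borel summation and Airy scaling; the only cosmetic difference is that you invoke Malgrange division where the paper writes the amplitude as an even function plus $t$ times an even function and evaluates at $t^2=-r^2$.

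One small correction: the localizing cutoff should be placed on $\Lambda$ (the paper's $\varrho A_h$) rather than in $(x,t)$, since other sheets of $\Lambda$ may lie over $(x_\star,t_\star)$ and then $\mathcal{K}_\Lambda(A_h)$ is not a single term $\alpha_h e^{iS/h}$ there. Your own nonstationary-phase argument in Step 1 already disposes of those sheets.
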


\subsection{Proof of Theorem \ref{thm2}}

The condition \ref{cnd3} from Assumptions \ref{assum}  implies that
\begin{equation}
\label{asyK}
[\mathcal{K}_\Lambda (A_h)](x,t) = \O(h^\infty)
\end{equation}
for $t \ge t_\mathrm{2} > 0$ uniformly on compacts in the $x$-space. Hence, for bounded $x$
\begin{equation}
\label{asyu}
u_h(x,\E) =
\frac{i}{h} \int\limits_{0}^{+\infty} \Theta(t) \, [\mathcal{K}_\Lambda (A_h)](x,t) \, \exp{ \left\{ \frac{i \E t}{h} \right\} } \, d t + O(h^\infty)
\end{equation}
where $\Theta$ is a cutoff function with a compact support $\supp{\Theta}$.

Besides, the condition \ref{Cnd5} of Theorem \ref{thm2} implies that the estimate \eqref{asyK} holds for $0 < t \le t_\mathrm{1}$ uniformly in any neighborhood of the point $x_\star$ not intersecting with $\pi(\supp{a})$. The\-re\-fore, we can choose $\Theta$ such that $\supp{\Theta} \subset (0,+\infty)$ and the estimate \eqref{asyu} holds for all $x$ from some neighborhood of $x_\star$.

The condition \ref{Cnd4} of Theorem \ref{thm2} and the condition \ref{cnd3} from Assumptions \ref{assum} show that for any neighborhood $U_{z_\star} \subset T^\ast \R^{n+1}_\indt$ of the point $z_\star$ there exists a sufficiently small neighborhood of the point $(x_\star, \E_0)$ such that its pre-image in the half-space $t > 0$ under the restriction $\pxE|_{\Lambda }$ is contained in $U_{z_\star}$. The\-re\-fore, we can choose $\Theta$ such that its support is contained in any neighborhood of $t_\star$ and the estimate \eqref{asyu} holds for all $(x,E)$ from some neighborhood of $(x_\star, \E_0)$.

Therefore
\begin{equation*}
u_h(x,\E) =
\frac{i}{h} \int\limits_{-\infty}^{+\infty}  [\mathcal{K}_\Lambda ( \varrho A_h )](x,t) \, \exp{ \left\{ \frac{i \E t}{h} \right\} } \, d t + O(h^\infty)
\end{equation*}
for some cutoff function $\varrho: \Lambda \to \R$ which can be chosen so that its support is contained in any neighborhood of the point $z_\star$. It means that the integral can be considered as the value of the precanonical operator in a local chart.

Let
\begin{gather*}
\gamma^{-1}: (p,x,t,\E) \mapsto \bigl( \np'(p,x,t,\E),   \, \nx'(x,\E), \, \nt'(p,x,t,\E), \, \nE'(x,\E) \bigr),
\\
\nE'|_{E=0} = \E_0
\end{gather*}
be the inverse diffeomorphism to $\gamma$ from Theorem \ref{thm1}. Then recomputing the precanonical operator in the coordinates $y = \nx (x, E)$, $\ve = \nE(x, \E)$ defined by Theorem \ref{thm1} (see Remark \ref{rfam} as well) we get
\begin{multline*}
u_h \bigl( \nx'(y,\ve), \, \nE'(y,\ve) \bigr) = {}
\\
\frac{i e^{{i W(y, \ve)}/{h}}}{h}  \int\limits_{-\infty}^{+\infty}  B_h(y,t) \, \exp{ \left\{ \frac{i ( t^3/3 + r^2 t  + \ve t ) }{h} \right\} } \, d t + O(h^\infty)
\end{multline*}
where the supports of all functions $B_h$ are contained in a neighborhood of the point $y=0$, $t=0$. If
$$
X(y) =  \nx'(y,0), \quad w(y) = W(y, 0).
$$
then
\begin{equation*}
u_h \bigl( X(y),E_0 \bigr) =
\frac{i e^{{i w(y)}/{h}}}{h}  \int\limits_{-\infty}^{+\infty}  B_h(y,t) \, \exp{ \left\{ \frac{i ( t^3/3 + r^2 t  ) }{h} \right\} } \, d t + O(h^\infty).
\end{equation*}
Choosing a large number $C>0$ and a cutoff function $\chi: \R \to \R$ such that: $\chi(t) = 1$ if $|t| \le C$ and $\chi(t) = 0$ if $|t| \ge C+1$, we get:
\begin{equation*}
\int\limits_{-\infty}^{+\infty}  B_h(y,t) \, \exp{ \left\{ \frac{i ( t^3/3 + r^2 t  ) }{h} \right\} } \, d t
=
\int\limits_{-\infty}^{+\infty} \chi(t) B_h(y,t) \, \exp{ \left\{ \frac{i ( t^3/3 + r^2 t  ) }{h} \right\} } \, d t
\end{equation*}
In order to finish the proof it is sufficient to show that for any function $B^i_h$ there exist functions $B^{i+1}_h$, $b^i_h$, and $c^i_h$ satisfying the equation
\begin{multline*}
\int\limits_{-\infty}^{+\infty} \chi(t)  B^i_h(y,t) \, \exp{ \left\{ \frac{i ( t^3/3 + r^2 t  ) }{h} \right\} } \, d t = {}
\\
2 \pi h^\frac13 b^i_h(y) \Ai \left(h^{-\frac23} r^2 \right) + 2 \pi h^\frac23 c^i_h(y) \Ai' \left(h^{-\frac23} r^2 \right) + {}
\\
 h \int\limits_{-\infty}^{+\infty} \chi(t) B^{i+1}_h(y,t) \, \exp{ \left\{ \frac{i ( t^3/3 + r^2 t  ) }{h} \right\} } \, d t + \O(h^\infty).
\end{multline*}
Indeed, let
\begin{multline*}
B^i_h(y,t) = {}^+\!B^i_h(y,t^2) +  {}^-\!B^i_h(y,t^2) t,
\\
b^i_h(y) = {}^+\!B^i_h(y,-r^2), \quad c^i_h(y) = {}^-\!B^i_h(y,-r^2);
\end{multline*}
then
$$
B_h^i(y,t) = b_h^i (y) + c_h^i(y) t + (t^2+r^2) \Delta_h(t,y), \quad B_h^{i+1}(y,t) = i \pd_t \Delta_h(t,y)
$$
and the following equalities prove our proposition:
\begin{multline*}
\int\limits_{-\infty}^{+\infty}  \chi(t) \exp{ \left\{ \frac{i ( t^3/3 + r^2 t  ) }{h} \right\} } \, d t =
\int\limits_{-\infty}^{+\infty}  \exp{ \left\{ \frac{i ( t^3/3 + r^2 t  ) }{h} \right\} } \, d t + \O(h^\infty) = {}
\\
h^\frac13 \int\limits_{-\infty}^{+\infty}  \exp{ \left\{ {i ( \tau^3/3 + r^2 \tau/ h^\frac23  ) } \right\} } \, d \tau + \O(h^\infty) =
2 \pi h^\frac13 \Ai \left(h^{-\frac23} r^2 \right) + \O(h^\infty),
\end{multline*}
\begin{multline*}
\int\limits_{-\infty}^{+\infty}  \chi(t) \,  t \exp{ \left\{ \frac{i ( t^3/3 + r^2 t  ) }{h} \right\} } \, d t =
\int\limits_{-\infty}^{+\infty}  t \exp{ \left\{ \frac{i ( t^3/3 + r^2 t  ) }{h} \right\} } \, d t + \O(h^\infty) = {}
\\
h^\frac23 \int\limits_{-\infty}^{+\infty} \tau \exp{ \left\{ {i ( \tau^3/3 + r^2 \tau/ h^\frac23  ) } \right\} } \, d \tau +
\O(h^\infty) = 2 \pi h^\frac23 \Ai' \left(h^{-\frac23} r^2 \right) + \O(h^\infty),
\end{multline*}
\begin{multline*}
\int\limits_{-\infty}^{+\infty}  \chi(t)  (t^2+r^2) \Delta_h(t,y) \exp{ \left\{ \frac{i ( t^3/3 + r^2 t  ) }{h} \right\} } \, d t = {}
\\
- i h \int\limits_{-\infty}^{+\infty}  \chi(t) \Delta_h(t,y)   d \exp{ \left\{ \frac{i ( t^3/3 + r^2 t  ) }{h} \right\} } = {}
\\
i h \int\limits_{-\infty}^{+\infty} \exp{ \left\{ \frac{i ( t^3/3 + r^2 t  ) }{h} \right\} }   \, d \left[ \chi(t) \Delta_h(t,y) \right]  = {}
\\
i h \int\limits_{-\infty}^{+\infty} \chi(t) \pd_t  \Delta_h(t,y) \exp{ \left\{ \frac{i ( t^3/3 + r^2 t  ) }{h} \right\} }  \, d t + {}
\\
i h \int\limits_{-\infty}^{+\infty}\Delta_h(t,y)  \pd_t  \chi(t)  \exp{ \left\{ \frac{i ( t^3/3 + r^2 t  ) }{h} \right\} }   \, d t  = {}
\\
h \int\limits_{-\infty}^{+\infty} \chi(t) B_h^{i+1}(y,t) \exp{ \left\{ \frac{i ( t^3/3 + r^2 t  ) }{h} \right\} }  \, d t + \O(h^\infty).
 \end{multline*}

Starting with $B^0_h = B_h$ and using this proposition again and again we get asymptotic series:
$$
b_h(y) = {2 \pi} \left( b^0_h(y) + h b^1_h(y) + h^2 b^0_h(y) + \dots \right) + O(h^\infty),
$$
$$
c_h(y) = {2 \pi} \left( c^0_h(y) + h c^1_h(y) + h^2 c^0_h(y) + \dots \right) + O(h^\infty).
$$
Any smooth functions with these asymptotic series satisfy the conditions of Theorem \ref{thm2}.


\section{The Bessel cylinder}

\label{BCyl}

In this section we consider the special case when the initial Lagrangian submanifold $\Lambda_0$ is the Bessel cylinder.

\subsection{Helmholtz equation and Bessel cylinder}

We discuss the physically relevant case of a Hamiltonian positively homogeneous with respect to $p$.
Such an Hamiltonian is given $H_0(x,p)=\frac{|p|^m}{\rho(x)}$, with $\rho$ a smooth positive function on $\R ^n$, and $m>0$.
When $m=2$ this is a conformal metric on the cotangent bundle $T^*\R ^n$.
Hamiltonians $H_0(x,p)=\frac{|p|^m}{\rho(x)}$ and $\widetilde H_0(x,p)=|p|^m-\rho(x)$
have the same trajectories up to a reparameterization of time
on the energy surfaces $\Sigma_1=\{H_0(x,p)=1\}$ and
$\widetilde\Sigma_0=\{\widetilde H_0(x,p)=0\}$ respectively.
When $m=2$, $\widetilde H_0$ defines the Helmholtz operator $h^2\Delta-\rho(x)$, see \cite{Ke, Ku}.

In case of variable coefficients it is convenient to consider instead Hamiltonians homogeneous of degree 1 in $p$ e.\,g.
change $H_0(x,p)=\frac{|p|^2}{\rho(x)}$ and its symmetric quantization
$\widehat H_0=\frac{1}{\sqrt{\rho(x)}}(-h^2\Delta)\frac{1}{\sqrt{\rho(x)}}$ to
$$
\widehat H_1=(\rho(x))^{-1/4}\sqrt{-h^2\Delta}(\rho(x))^{-1/4}
$$
using Weyl Calculus for $h$-pseudodifferential operators, see e.\,g. \cite{Iv}.

Most intrinsic formulae are available in that case, in particular we can construct
global eikonal coordinates and thus avoid the microlocal reduction to $H= \widehat{p}_n$  as in \cite{MelUh}.

Taking into account glancing intersection amounts
to ``correct'' locally the formulae giving the phase function and the half-density in new local coordinates,
which are obtained using a normal form. We follow here \cite{BoRo, Ro},  and references therein.

Let us consider the \textit{Bessel cylinder} $\Lambda_0\subset T^* \R^n$
  \begin{equation}
  \label{BC}
  \Lambda_0=\{x=\varphi\omega(\psi), \, p=\omega(\psi) : \varphi\in\R\}
  \end{equation}
  where $\omega(\psi)$ is a vector
  on the unit sphere in $\R^n$.

Our main motivation is the study of Bessel beams, i.\,e.
a wave whose amplitude is described by a Bessel function of the first kind that we will describe in Subsection \ref{applB} in more detail.

For the Helmholtz equation with variable coefficients, we have the following:

  \begin{prop} Let $\Lambda_0$ be the $n$-dimensional Bessel cylinder (\ref{BC}) and $H\in C^\infty(T^*\R^n)$ be homogeneous
    of degree $m$ with respect to $p$. Then $z=(x,p)\in\Lambda_0$ is a glancing point at energy $\E$ if and only if
    \begin{equation}
    \label{gl2}
    \partial_p H(z) + \varphi\partial_x H(z) = m H\omega(\psi), \; \langle-\partial_x H(z), \omega(\psi) \rangle=0, \; H(z)=\E
    \end{equation}
  \end{prop}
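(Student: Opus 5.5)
The plan is to use the definition of a glancing point from Section \ref{sect1}: $z$ is glancing exactly when it is critical for $H|_{\Lambda_0}$, i.e. when $dH(z)$ vanishes on the tangent space $T_z\Lambda_0$. The energy condition $H(z)=\E$ just records the critical value. So the whole proof is to compute $T_z\Lambda_0$ for the Bessel cylinder \eqref{BC} and rewrite the vanishing of $dH(z)$ on it, using Euler's identity for the homogeneous function $H$.

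\emph{Tangent space.} We parametrize $\Lambda_0$ by $(\varphi,\psi)$, where $\psi$ runs over local coordinates on the unit sphere and $\omega=\omega(\psi)$. Differentiating $(x,p)=(\varphi\omega,\omega)$ gives two kinds of tangent vectors.
\begin{itemize}
\item Differentiating in $\varphi$ gives $(\delta x,\delta p)=(\omega,0)$.
\item Differentiating in $\psi$ gives $(\delta x,\delta p)=(\varphi v, v)$, where $v=\partial_\psi\omega\cdot\delta\psi$ ranges over the whole hyperplane $\omega^\perp$, the tangent space of the sphere at $\omega$.
\end{itemize}
These vectors span $T_z\Lambda_0$, which is $n$-dimensional since $\omega\notin\omega^\perp$.

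\emph{Criticality conditions.} Since $dH(z)(\delta x,\delta p)=\langle\partial_x H,\delta x\rangle+\langle\partial_p H,\delta p\rangle$, the criticality of $z$ splits into two conditions.
\begin{itemize}
\item From the $\varphi$-direction: $\langle\partial_x H(z),\omega\rangle=0$, which is the second relation in \eqref{gl2}.
\item From the $\psi$-directions: $\langle \partial_p H(z)+\varphi\partial_x H(z), v\rangle=0$ for all $v\perp\omega$. Equivalently, $\partial_p H+\varphi\partial_x H=\lambda\,\omega$ for some scalar $\lambda$.
\end{itemize}

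\emph{Identifying $\lambda$.} Take the inner product of $\partial_p H+\varphi\partial_x H=\lambda\,\omega$ with $\omega$ and use $|\omega|=1$. This gives $\lambda=\langle\partial_p H,\omega\rangle+\varphi\langle\partial_x H,\omega\rangle$, and the second term vanishes by the first condition. Since $p=\omega$ on $\Lambda_0$, Euler's relation for $H$ homogeneous of degree $m$ in $p$ gives $\langle\partial_p H(x,p),p\rangle=mH(x,p)$. Hence $\lambda=mH(z)$, which yields the first relation of \eqref{gl2}.

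\emph{Converse.} If \eqref{gl2} holds, pairing the first relation with $v\perp\omega$ gives the $\psi$-conditions, and the second relation gives the $\varphi$-condition. Hence $dH(z)$ vanishes on $T_z\Lambda_0$ and $z$ is glancing at energy $\E=H(z)$.

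There is no real obstacle. The only points needing care are:
\begin{itemize}
\item checking that the $\psi$-derivatives of $\omega$ span all of $\omega^\perp$, which holds because $\omega$ is a local parametrization of the sphere;
\item noting that Euler's identity applies, since $p=\omega\neq 0$ on $\Lambda_0$, so homogeneity (possibly only away from $p=0$) is enough.
\end{itemize}
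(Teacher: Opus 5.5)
Your proof is correct and follows essentially the same route as the paper. Both parametrize $T_z\Lambda_0$ by $(\delta\varphi,\delta\psi)$, split the condition along $\omega$ and $\omega^\perp$, and identify the coefficient of $\omega$ as $\langle\partial_p H,\omega\rangle=mH$ via Euler's identity; the only difference is that the paper phrases the condition as $v_H\in T_z\Lambda_0$ (and then checks $\nabla(H|_{\Lambda_0})=0$), whereas you work directly with $dH(z)|_{T_z\Lambda_0}=0$, which is equivalent because $\Lambda_0$ is Lagrangian and makes the ``if and only if'' more transparent.
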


\noindent {\it Proof}: We complete $\omega(\psi)$ in $\S^{n-1}$ (the unit sphere) into a (direct) orthonormal basis
$\omega^\perp(\psi)=\bigl(\omega_1(\psi),\cdots,\omega_{n-1}(\psi)\bigr)$ of $\R^n$, and denote by
$\omega^\perp(\psi)\delta\psi=\omega_1(\psi)\delta\psi_1+\cdots+\omega_{n-1}(\psi)\delta\psi_{n-1}$ a section of $T \S^{n-1}$,
$\delta\psi_j\in\R$.
The tangent space $T_z\Lambda_0$ has the parametric equations
\begin{equation*}
  \delta X=\omega(\psi)\delta\varphi+\varphi\omega^\perp(\psi)\delta\psi, \
 \delta P=\omega^\perp(\psi)\delta\psi, \ (\delta\varphi,\delta\psi)\in \R ^n
\end{equation*}
so $v_H\in T_z\Lambda_0$ if and only if there exist $(\delta\varphi,\delta\psi)$ such that
\begin{equation*}
  \partial_pH=\omega(\psi)\delta\varphi+\varphi\omega^\perp(\psi)\delta\psi-\partial_xH=\omega^\perp(\psi)\delta\psi
  \end{equation*}
Taking scalar products with $\omega(\psi),\omega^\perp(\psi)$, and using that
$(\omega(\psi),\omega^\perp(\psi))$ form a basis of $\R^n$, readily shows that relations
\begin{equation}
\label{gl4}
  \partial_p H+\varphi\partial_x H=\langle\partial_p H,P(\psi)\rangle\omega(\psi), \
  \langle-\partial_x H,\omega(\psi)\rangle=0
  \end{equation}
are necessary and sufficient
for $v_H\in T_z\Lambda_0$.

We set ${\cal H}(\varphi,\psi)=H|_{\Lambda_0}$.
Then
\begin{equation*}
  \nabla{\cal H}(\varphi,\psi)=\bigl(\langle\partial_xH,\omega^\perp(\psi)\rangle, \varphi\langle\partial_xH,\omega^\perp(\psi)\rangle+
  \langle\partial_pH,\omega^\perp(\psi)\rangle\bigr)
  \end{equation*}
so (\ref{gl4}) readily gives
$\nabla{\cal H}(\varphi,\psi)=0$.
Therefore $z=(x,p)\in\Lambda_0$ is a glancing point.

Now, if $H$ is positively homogeneous of degree $m$ with respect to $p$,  using Euler identity, we get
$\delta\varphi=\langle\partial_p H,P(\psi)\rangle=mH$,
and (\ref{gl2}) holds if and only if
for $v_H \in T_z\Lambda_0$ when $H=\E$. $\Box$ \\

In particular when $H(z)=\frac{|p|^m}{\rho(x)}$ is a conformal metric, with $\rho$ a smooth positive function on $\R^n$,
$z$ is a glancing point if and only if
\begin{equation*}
    \begin{aligned}
    &\hbox{either}: \ \varphi\neq0 \ \hbox{and} \ \nabla\rho=0, \cr
    &\hbox{or}: \ \varphi=0 \ \hbox{and} \
    \langle\nabla\rho(0),\omega(\psi)\rangle=0.
    \end{aligned}
\end{equation*}

\begin{ex}\label{Ex3}: Let $n=2$, $m=1$, $H(z)=\frac{|p|}{\rho(x)}$, with
$\rho(x)=\frac{1}{2}(1+(x-x_0)^2)$. If $x_0=\varphi\omega(\psi)\neq0$, we have
$\rho^4(x_0)\det \nabla^2 (H|_{\Lambda_0})=\varphi^2$, $\rho^2(x_0)\Tr \nabla^2 (H|_{\Lambda_0})=-(1+\varphi^2)$.
  Critical energy is given by $\E=H(z)=\frac{|p|}{\rho(x)}$, i.\,e. $\E_0=1/\rho(x_0)$.
  \end{ex}

\subsection{Eikonal coordinates and generating families in the case $m=1$}
\label{genfam}

Let $\iota:L\to T^*\R^d_x$ be a smooth immersed Lagrangian manifold,
 the 1-form $p\,dx$ is closed on $L$, and so locally $p\,d x = d S$.
 If $p\,d x \neq 0$ then $S$ can be chosen as a (local) coordinate on $L$.
 Following \cite{DoMaNaTu, DoNaSh} we say that $S$ is the {\it eikonal} or the {\it action}
 on $L$ which can be completed to a system of {\it eikonal coordinates} on $L$.

 Let $L=\Lambda_0 \subset T^*\R^n$ the Bessel cylinder (\ref{BC}). Since $p\,dx=d\varphi$ on
 $\Lambda_0$ we get that $(\varphi,\psi)$ are eikonal coordinates.

Let  $L=\Lambda\subset T^*\R^{n+1}_\mathrm{t}$ be the integral manifold in the extended phase space. If $m=1$ then according to Euler
identity
$$\dot{S} = p \dot{x} - H = p \, \partial_p H - H = 0$$
along the trajectories and $(\varphi, \psi, t)$ are eikonal coordinates on $\Lambda$.

Let
$$\Phi: \R^N \times \R^d_x \to \R, \quad \theta \in \R^N$$
be a smooth function such that the $N\times (d+N)$ matrix
$\bigl( \partial^2_{\theta \theta} \Phi, \partial^2_{\theta x} \Phi \bigr)$
 has rank $N$ on the critical set
\begin{equation}
  \label{3.3}
  C_\Phi=\left\{ (\theta, x)\in \R^d\times \R^N: \partial_\theta \Phi (\theta, x)=0 \right\}.
  \end{equation}
In other words,
$C_\Phi$ defined by the equations (\ref{3.3}) is a smooth submanifold in the sense of the implicit function theorem.
Then
\begin{equation}
\label{3.4}
  \iota_\Phi:C_\Phi \to T^\ast\R^d_x, \quad (\theta, x) \mapsto  \bigl(x, \partial_x \Phi (\theta, x)\bigr)
  \end{equation}
is a Lagrangian (i.\,e. $\iota_\Phi^\ast d p \wedge d x = 0$) immersion,
$L=\iota_\Phi(C_\Phi)$ is an immersed Lagrangian submanifold, and $\Phi$ is called
is called a \textit{generating family} or \textit{phase function} of $L$ --- see e.\,g. \cite{AVG, Iv}.

It is standard to show that the number $N$ of $\theta$-variables can be reduced to $N \leq d$, and the minimal possible $N$
is equal to the co-rank of the projection $L \to \R _x^d$.

If the system (\ref{3.3}) is degenerate in the sense of the implicit function theorem then
$L=\iota_\Phi(C_\Phi)$ is a singular (isotropic) submanifold. In particular its dimension can be less than $d$. See e.\,g.
\cite{CdV} for a general discussion.

Let  $d=n$ and $L=\Lambda_0 \subset T^*\R^n$ be Bessel cylinder (\ref{BC}). According to private communications
with S.\,Dobrokhotov and V.\,Nazaikinskii, we use among the $\theta$-parameters
a Lagrange multiplier $\lambda$ and get for $\Lambda_0$ the generating family
\begin{multline*}
\Phi_0(x, \theta) = \varphi + \lambda \langle \omega(\psi), x - \varphi \omega(\psi) \rangle =
(1-\lambda) \varphi + \lambda \langle \omega(\psi), x  \rangle,\\
\theta = (\lambda, \varphi, \psi) \in \R^{n+1}.
\end{multline*}
Let now
$$p=P(\varphi, \psi, t), \quad x=X(\varphi, \psi, t)$$
be the Hamiltonian trajectory with an initial condition on the Bessel cylinder (\ref{BC}):
$$P(\varphi, \psi, 0) = \omega(\psi), \quad X(\varphi, \psi, 0) = \varphi \omega(\psi)$$
Apply then (\ref{3.4}) to $d=n+1$ and $L=\Lambda$ being the flow out of $\Lambda_0$ by $v_H$ in $T^\ast\R^{n+1}_\mathrm{t}$.

\begin{prop}
\label{prop2}
Let $H(x,p)$ be positively
homogeneous of degree 1 with respect to $p$.
Then
\begin{equation*}
\Phi(\theta, x, t) = \varphi + \lambda \langle P(\varphi, \psi, t) , x - X(\varphi, \psi, t) \rangle, \quad
\theta = (\lambda, \varphi, \psi) \in \R^{n+1}
\end{equation*}
is a generating family for
$\Lambda \subset T^\ast \R^{n+1}_\mathrm{t}$
at the points satisfying the inequality $\det(P,P_\psi)\neq0$,
which holds at least for small $t$ (because $\det(P,\partial_\psi P)=1$ for $t=0$). In particular, $\Phi$ satisfies
the initial condition $\Phi(\theta, 0, x)=\Phi_0(\theta,x)$.
\end{prop}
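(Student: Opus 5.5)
We have to prove Proposition \ref{prop2}: that $\Phi(\theta,x,t)=\varphi+\lambda\langle P,x-X\rangle$ with $\theta=(\lambda,\varphi,\psi)$ generates $\Lambda$ in $T^\ast\R^{n+1}_t$. Here the dual variable to $t$ is $-E$, so we must check that on $C_\Phi$ we have $p=\pd_x\Phi$, $-E=\pd_t\Phi$, and that the rank condition holds. The plan is to compute the critical set directly, using the eikonal identity $\dot S=0$ (valid for $m=1$) and the Lagrangian property of $\Lambda$.

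\textbf{Step 1: the critical set.} The equations are:
\begin{itemize}
\item $\pd_\lambda\Phi=\langle P,x-X\rangle=0$;
\item $\pd_\varphi\Phi=1+\lambda\bigl(\langle P_\varphi,x-X\rangle-\langle P,X_\varphi\rangle\bigr)=0$;
\item $\pd_\psi\Phi=\lambda\bigl(\langle P_\psi,x-X\rangle-\langle P,X_\psi\rangle\bigr)=0$.
\end{itemize}
Since $p\,dx=d\varphi$ on $\Lambda_0$ and $\dot S=0$, the form $P\,dX$ restricted to $\{t=\mathrm{const}\}$ in the coordinates $(\varphi,\psi)$ equals $d\varphi$. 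This gives $\langle P,X_\varphi\rangle=1$ and $\langle P,X_\psi\rangle=0$ for all $t$. Hence the equations become $\langle P,x-X\rangle=0$, $1+\lambda(\langle P_\varphi,x-X\rangle-1)=0$, and $\lambda\langle P_\psi,x-X\rangle=0$.

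Near the set $\{x=X,\lambda=1\}$ we have $\lambda\ne0$, so $x-X$ is orthogonal to $P$ and to every $P_\psi$. When $\det(P,P_\psi)\neq0$ these vectors span $\R^n$, so $x=X$ and then $\lambda=1$. Thus $C_\Phi=\{\lambda=1,\ x=X(\varphi,\psi,t)\}$, which is smooth and parametrized by $(\varphi,\psi,t)$.

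\textbf{Step 2: the image of $C_\Phi$.} On $C_\Phi$, $\pd_x\Phi=\lambda P=P$. Also
\[
\pd_t\Phi=\lambda\bigl(\langle P_t,x-X\rangle-\langle P,X_t\rangle\bigr)=-\langle P,\pd_pH\rangle=-H,
\]
using Euler's identity. Hence $E=H(X,P)$, and the image of $C_\Phi$ is exactly the flow-out $\Lambda$ of $\Lambda_0$ with $E=H$.

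\textbf{Step 3: nondegeneracy.} We must show the differentials $d(\pd_{\theta_j}\Phi)$ are independent at $C_\Phi$. The key observation is that the $x$-derivatives of the $\theta$-gradient are
\[
\pd_x\pd_\lambda\Phi=P,\qquad \pd_x\pd_\varphi\Phi=\lambda P_\varphi,\qquad \pd_x\pd_\psi\Phi=\lambda P_\psi .
\]
With $\lambda=1$, a vanishing combination must satisfy $\alpha P+\beta P_\varphi+\gamma P_\psi=0$ together with the $\lambda$-column condition. The $\lambda$-derivatives of the $\theta$-gradient at $C_\Phi$ are $(0,-1,0)$, because $\pd_\lambda\pd_\varphi\Phi=\langle P_\varphi,x-X\rangle-1=-1$. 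So the $\lambda$-column forces $\beta=0$. Then $\det(P,P_\psi)\ne0$ forces $\alpha=\gamma=0$.

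The main point requiring care is Step 1's identity $\langle P,X_\varphi\rangle=1$, $\langle P,X_\psi\rangle=0$ for all $t$. It is obtained by differentiating in $t$ and using the Lagrangian property of $\Lambda$, or equivalently the invariance of $p\,dx-H\,dt$ with $\dot S=0$. The identity is checked at $t=0$ from the Bessel cylinder formula.

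Finally, at $t=0$ we have $P=\omega$ and $X=\varphi\omega$, so $\Phi$ reduces to $\Phi_0$. Moreover $\det(\omega,\omega_\psi)=1$ in the orthonormal frame, which gives the claim for small $t$ by continuity.
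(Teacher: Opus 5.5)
Your proof is correct and follows essentially the route the paper indicates; the paper itself gives no detailed proof, only the ingredients. Those ingredients are: Euler's identity making $(\varphi,\psi,t)$ eikonal coordinates, hence $\langle P,X_\varphi\rangle=1$ and $\langle P,X_\psi\rangle=0$; the critical set $\{\lambda=1,\ x=X\}$, on which $\Phi$ is the 1-jet of the Hamilton--Jacobi solution; and nondegeneracy, where your $(\lambda,x)$-minor is exactly $\pm\det(P,\partial_\psi P)=F[\Phi,dy]|_{C_\Phi}$ from \eqref{3.9}. One small simplification: $\partial_\varphi\Phi=0$ already forces $\lambda\neq 0$ on the whole critical set, so you need not restrict to a neighborhood of $\{x=X,\lambda=1\}$.
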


Note that $\Phi$ is the 1-jet on $\Lambda$
of the solution to the Hamilton--Jacobi equation
$$
\partial_t S + H(x,\partial_x S)=0, \quad  S|_{t=0}=\langle x,\omega(\psi)\rangle,
$$
namely
$\partial_t\Phi+H(x,\partial_x\Phi) = 0$ after the substitution $\lambda = 1$, $x = X(\varphi, \psi, t)$.

\subsection{Invariant density using the eikonal coordinates}

Let $y=(y_1,\cdots,y_d)$ be some local coordinates on $C_\Phi$ extended locally to smooth functions
on $\R^d\times\R^N$. Then the non vanishing real function

\begin{equation}\label{3.8}
F[\Phi,dy] = \frac{dy\wedge d(\partial_{\theta} \Phi)}{d x \wedge d\theta} =
\frac{dy_1 \wedge \dots \wedge d y_d \wedge d(\partial_{\theta_1} \Phi)\wedge
  \cdots\wedge d (\partial_{\theta_N} \Phi)}{d x_1 \wedge \dots \wedge d x_d \wedge d\theta_1\wedge\cdots\wedge d\theta_N}
\end{equation}
is well-defined near $C_\Phi$ as the quotient of two volume forms.

The restriction of this function to $C_\Phi$ is important for computations of the Maslov canonical operator on a Lagrangian submanifold
$L = \iota(C_\Phi)$ via its generating family $\Phi$ --- see e.\,g. \cite{DoMaNaTu, DoNaSh}.

\begin{prop} 
  Let $H(x,p)$ be positively homogeneous of degree 1 with respect to $p$, $\Phi$ be the generating family
  for $\Lambda\subset \R^{n+1}_\mathrm{t}$ from Proposition \ref{prop2}, $\theta=(\lambda,\varphi,\psi)$, and $y=(\varphi,\psi,t)$.
Then
\begin{equation}\label{3.9}
F[\Phi,dy]|_{C_\Phi}=
\pm\det(P,\partial_\psi P).
\end{equation}
\end{prop}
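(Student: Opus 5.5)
The plan is to compute the quotient \eqref{3.8} directly. We work in the extended space $\R^{n+1}_\mathrm{t}\ni(x,t)$ with phase variables $\theta=(\lambda,\varphi,\psi)$ and coordinates $y=(\varphi,\psi,t)$. Since $d\varphi\wedge d\psi\wedge dt$ already stands in the numerator, only the $dx$ and $d\lambda$ components of $d(\partial_\theta\Phi)$ contribute. So the whole computation reduces to an $(n+1)\times(n+1)$ determinant in $(dx,d\lambda)$ evaluated on $C_\Phi$.

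\textbf{Step 1 (eikonal identities).} First I would record the identities coming from the eikonal coordinates of Subsection \ref{genfam}. For $m=1$ the Euler identity gives $\dot S=0$ along trajectories, and $p\,dx=d\varphi$ on $\Lambda_0$. Invariance of $p\,dx-H\,dt$ under the flow then yields, identically in $(\varphi,\psi,t)$,
$$
\langle P,\partial_\varphi X\rangle=1,\qquad \langle P,\partial_\psi X\rangle=0 .
$$
With these,
$$
\partial_\lambda\Phi=\langle P,x-X\rangle,\quad
\partial_\varphi\Phi=1+\lambda\bigl(\langle \partial_\varphi P,x-X\rangle-1\bigr),\quad
\partial_\psi\Phi=\lambda\langle \partial_\psi P,x-X\rangle .
$$

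\textbf{Step 2 (the critical set).} Next I identify $C_\Phi$. The equations $\partial_\lambda\Phi=0$ and $\partial_\psi\Phi=0$ say that $x-X$ is orthogonal to $P$ and (for $\lambda\neq0$) to all $\partial_\psi P$. Under $\det(P,\partial_\psi P)\neq0$ these vectors span $\R^n$, so $x=X$. Then $\partial_\varphi\Phi=0$ forces $\lambda=1$. The case $\lambda=0$ is excluded, since then $\partial_\varphi\Phi=1$. Hence
$$
C_\Phi=\{x=X(\varphi,\psi,t),\ \lambda=1\},
$$
which is consistent with Proposition \ref{prop2}, and $y=(\varphi,\psi,t)$ are indeed coordinates on it.

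\textbf{Step 3 (the wedge product).} Modulo $d\varphi,d\psi,dt$ and evaluated on $C_\Phi$ we have
$$
d(\partial_\lambda\Phi)\equiv\langle P,dx\rangle,\qquad
d(\partial_\varphi\Phi)\equiv -d\lambda+\langle\partial_\varphi P,dx\rangle,\qquad
d(\partial_\psi\Phi)\equiv\langle\partial_\psi P,dx\rangle .
$$
Here the $d\lambda$-coefficients $\langle\partial_\varphi P,x-X\rangle-1$ and $\langle\partial_\psi P,x-X\rangle$ become $-1$ and $0$ respectively on $C_\Phi$. Expanding the wedge product, the term containing $\langle\partial_\varphi P,dx\rangle$ involves $n+1$ one-forms in the $n$ variables $dx$ and therefore vanishes. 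What remains is
$$
-\,\langle P,dx\rangle\wedge d\lambda\wedge\langle\partial_\psi P,dx\rangle
=\pm\det(P,\partial_\psi P)\,dx\wedge d\lambda .
$$
Dividing by $dx\wedge dt\wedge d\lambda\wedge d\varphi\wedge d\psi$ gives \eqref{3.9}. The sign is fixed by the orientation conventions and the reordering of the factors.

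The main point needing care is Step 1: justifying $\langle P,\partial_\psi X\rangle=0$ and $\langle P,\partial_\varphi X\rangle=1$ for all $t$, and not only at $t=0$. I would prove this by differentiating in $t$ and using Hamilton's equations together with homogeneity of degree 1. One gets
$$
\partial_t\langle P,\partial_\psi X\rangle=\partial_\psi\bigl(\langle P,\partial_pH\rangle-H\bigr)=0
$$
by the Euler identity, and similarly for $\varphi$.
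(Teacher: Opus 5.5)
Your proposal is correct. The identities $\langle P,\partial_\varphi X\rangle=1$ and $\langle P,\partial_\psi X\rangle=0$, which you justify properly through the degree-one Euler identity, give $C_\Phi=\{x=X,\ \lambda=1\}$; on this set the $\lambda$-column of the Jacobian of $\partial_\theta\Phi$ with respect to $(x,\lambda)$ has a single nonzero entry, $-1$ in the $\partial_\varphi\Phi$ row, so expanding along it leaves exactly $\pm\det(P,\partial_\psi P)$.

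The paper does not carry out this computation; it only refers to [Ro] and [BoRo]. Your argument is the direct calculation that its eikonal-coordinate set-up in Subsection~\ref{genfam} prepares for ($p\,dx=d\varphi$ on $\Lambda_0$ and $\dot S=0$ for $m=1$), so it is essentially the intended route.
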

Details can be found in \cite{Ro}, \cite{BoRo}.

\subsection{Application to Bessel functions}

\label{applB}

We apply our considerations to Helmoltz operator with constant coefficients, thus retrieving integral formulas for Bessel functions.
Assume for instance we are given a non-linear
  Helmholtz equation on $\R^2$ of the type
  $$-h^2\Delta u-u=F(\e u), \quad F(0)=0, \quad F'(0)=1$$
  where $F$ is a smooth function and $\e$ is a small parameter.
  We expand $u=u_0+\e u_1+\cdots$, and find at zero order in $\e$ the equation $(-h^2\Delta-1)u_0=0$.
  Its radially symmetric solution is given by $u_0=f_h(x)$ where
  $$f_h(x)=(2\pi/h)^{1/2}J_0\bigl({|x| / h}\bigr),$$
  and is microlocalized on Bessel cylinder $\Lambda_0\subset T^* \R^2$ defined by (\ref{BC}) for $n=2$.
  Here
  $$J_0\bigl({|x| / h}\bigr)=\frac{1}{2\pi}\int_{-\pi}^\pi e^{i\langle x,\omega(\psi)\rangle/h}\,d\psi=\frac{1}{2\pi}\int_{-\pi}^\pi e^{i|x|(\sin\psi)/h}\,d\psi$$
  is Bessel function of order 0.

  At first order in $\e $ we get the Helmholtz equation
  $$(-h^2\Delta-1) u_1 =f_h, \quad H=p^2$$
  where $u_1 = u(x,1)$ in our previous notation, and
  $$u_h(x,1) = -(|x|/2 h)J_1(|x|/h) =-2(|x|/2h)^2\int_0^1tJ_0(t|x|/h)\,dt$$
  (where $J_1(z)=(2i\pi)^{-1}\int_{-\pi}^\pi e^{iz\cos\psi}e^{i\psi}\,d\psi$ is Bessel function of order 1)
  is its radially symmetric solution, which is again microlocalized on Bessel cylinder. We observe that all points of $\Lambda_0$ turn out
  to be glancing
  because $\Lambda_0 \subset \Sigma_1$ and $H_{\Lambda_0} = 1$.

  In case of constant coefficients we need not replace $p^2$ by $|p|$, and can use the generating family of Proposition 3.3.
  Consider first the homegeneous equation $(-h^2\Delta-1)u_h(x)=0$ in $\R ^2$. This amounts to choose the constant amplitude $a=1$ on $\Lambda_0$.
  We have
  $$P(t,\varphi,\psi)=\omega(\psi), \quad X(\varphi,\psi,t)=2t\omega(\psi)+\varphi\omega(\psi)$$
Choosing some local coordinates $y$ on $C_\Phi$, it is easy to see that
the (inverse) density on $\Lambda$ (up to sign) is given by $F[\Phi,\,dy]=\det(P,P_\psi)=1$, and
\begin{equation}\label{3.10}
  [{\cal K}_\Lambda a](x,t)= \frac{1}{2\pi h}\int e^{i\Phi(t,x,\theta)/h}F[\Phi,\,dy] a(y)\,d\theta
  \end{equation}
So the solution of $(-h^2\Delta-E)u_h=f_h$ is given by
$$
u_h(x)=\frac{i}{h}\int_0^\infty \,dt [{\cal K}_\Lambda a](x,t)e^{iEt/h}.
$$
Let $\Phi^E_+=\Phi+Et$, we compute the critical point
\begin{equation}\label{3.11}
  \begin{aligned}
&{\partial\Phi^E_+\over \partial\lambda}=\langle \omega^\perp(\psi),x-X(\varphi,\psi,t)\rangle=0, \quad {\partial\Phi^E_+\over \partial\varphi}=1-\lambda=0\cr
    &{\partial\Phi^E_+\over \partial\psi}=\lambda\langle \omega^\perp(\psi),x-X(\varphi,\psi,t)\rangle=\lambda\langle \omega^\perp(\psi),x\rangle=0
  \end{aligned}
\end{equation}
while ${\partial\Phi^E_+\over \partial t}=-2\lambda+E=-1$
when $E=\lambda=1$. Repeated integration by parts in (\ref{3.10}) with respect to  $t$ gives
\begin{equation}\label{3.12}
  u_h(x)=\frac{1}{2\pi h}\int e^{i\Phi^E_+(0,x,\theta)/h}\bigl(a+{h\over i}{\partial a\over \partial t}+\cdots\bigr) \,d\lambda\,d\varphi \,d\psi
  \end{equation}
Applying stationary phase to (\ref{3.12}) with respect to $(\lambda, \varphi)$, using that the critical value of
$\Phi_0=\Phi(0,x,\theta)$ is $\langle \omega(\psi),x\rangle$
readily shows that when $a=1$, we get only the first term, namely
$$u_h(x)=\int_{-\pi}^\pi e^{i\langle \omega(\psi),x\rangle/h}\,d\psi+{\cal O}(h^\infty)=2\pi J_0(|x|/h)+{\cal O}(h^\infty)$$

Note that the same kind of argument works for equation $(-h^2\Delta-1)u_h(x)=J_0(|x|/h)$, which gives
$u_h(x)=-{|x|\over h}J_1(|x|/h)$, etc\dots. In fact all these functions are microlocally supported on Bessel cylinder.

\section{Glancing hypersurfaces and La\-g\-ran\-gi\-an intersections}

We discuss here a possible application of non transverse Lagrangian intersections to problems of diffraction by an obstacle.
Let $M$ be a smooth manifold, since we are working locally, we will assume $M=\R^n$.
Let $F,G$ be two smooth hypersurfaces of $T^*M$,  intersecting
transversally at $z$. Recall \cite[Definition 21.4.6]{Ho} that $F$ and $G$ are said to be {\it glancing} at $z$ if and only if
the Hamilton foliation of $F=\{f=0\}$ and
$G=\{g=0\}$ (locally near $z$) are simply tangent at $z$.

Stated otherwise, we have $f(z)=g(z)=\{f,g\}(z)=0$ (Poisson bracket),
but the second Poisson brackets $\{f,\{f,g\}\}(z), \{g,\{g,f\}\}(z)$ are non zero
(we recall that Poisson bracket $\{f,g\}$ of $f=f(x,p)$, $g=g(x,p)$ is defined as $\{f,g\}=v_f g=-v_g f$, where as before $v_f$
denotes Hamilton vector field.~)

By the theorem of equivalence of glancing hypersurfaces of Melrose \cite[Theorem 21.4.8]{Ho}
there are local symplectic coordinates $(x,\xi)$ vanishing at $z$ such that
$F,G$ are defined resp. by $x_1=0$ and $\xi_1^2-x_1-\xi_2=0$.
Then $g=\xi_1^2-x_1-\xi_2=0$ will be the ``normal form'' of $H$ in these coordinates. (We use the notation $g$ for the normal form of $H$, or $H-\E$).

We apply this theorem to $G$ being the energy surface $H=\E$ (i.\,e. $g=0$) and $F$ an auxiliary hypersurface intersecting $G$ transversally at
a glancing point $z$. We want to find some germs of Lagrangian manifolds $\Lambda$ such that $\Lambda$ is transverse to $F$ at $z$
but $(\Lambda,G)$ has glancing intersection at $z$.
This means that $T_z\Lambda\cap(T_zF)^\sigma=\{0\}$ and $\R v_H(z)=(T_zG)^\sigma\subset T_z\Lambda$, where superscript $\sigma$
denotes symplectic orthogonal.
Assume $n=2$ for simplicity.

Define $\Lambda$ locally near $z$ by $f_1=f_2=0$, with $\{f_i,f_j\}=0$. Consider the symmetric matrix
$$
A_z=A_z(\Lambda,G)={}
\\
\begin{pmatrix}\{f_2,\{f_2,g\}\}&-\{f_1,\{f_2,g\}\}\\ -\{f_2,\{f_1,g\}\}&\{f_1,\{f_1,g\}\}\end{pmatrix}(z)
$$
and the vector
$$
B_z=B_z(\Lambda,G)=
\begin{pmatrix}\{g,\{g,f_1\}\} \\ \{g,\{g,f_2\}\} \end{pmatrix}(z).
$$
Let $z$ be a {\it glancing point} for the pair $(\Lambda,G)$ i.\,e.
$$
g(z)=f_1(z)=f_2(z)=0, \\
 \{g,f_1\}(z)=\{g,f_2\}(z)=0.
$$
We distinguish the following 10 possibilities for the 2-jets of $(\Lambda,G)_z$ (not covering the entire classification of \cite{ZaMy}):
\begin{enumerate}
\item
$\det A_z> 0, \ B_z\neq0$;
\item
$\det A_z > 0, \ B_z = 0$;
\item
$\det A_z < 0, \ {}^tB_z A_z B_z \neq 0$;
\item
$\det A_z < 0, \ {}^tB_z A_z B_z = 0, \ B_z\neq 0$;
\item
$\det A_z < 0, \ B_z = 0$;
\item
$\det A_z = 0, \ {}^tB_z A_z B_z \neq 0$;
\item
$\det A_z = {}^tB_z A_z B_z = 0, \ A_z \neq 0, \ B_z \neq 0$;
\item
$\det A_z = \ B_z = 0, \ A_z \neq 0$;
\item
$A_z = 0, \ B_z \neq 0$;
\item
$A_z = B_z = 0$.
\end{enumerate}

We know that a general Lagrangian manifold can be parameterized in the mixed representation, so when $n=2$ by one of the following cases
\begin{equation*}
  \begin{aligned}
(I)  \; & \Lambda=\left\{p=\partial_x\phi \right\}, \\
(II) \; & \Lambda=\left\{x=-\partial_p\phi \right\}, \\
(III)\; & \Lambda=\left\{x_1=-\partial_{p_1}\phi, p_2=\partial_{x_2}\phi \right\}, \\
(IV) \; & \Lambda=\left\{p_1=\partial_{x_1}\phi, x_2=-\partial_{p_2}\phi \right\}.
  \end{aligned}
\end{equation*}
To determine $\Lambda$, we write that
the glancing intersection of $(\Lambda,G)$ at $z$ should take place at $z=(x_1,x_2,p_1,p_2)=0$, i.e. for $z$ small enough,
$v_g(z)\in T_z\Lambda$ implies $z=0$. Then we check that $\Lambda$ is transverse to $F$ at $z=0$.

We have
\begin{prop}
Assume $\Lambda$ as above to be parameterized by a quadratic phase $\phi=\phi_0$.
In Cases (I), (II), (III) $\phi_0$ are one-parameter families taking values:
\begin{equation}
  \begin{aligned}
&(I) \quad \phi_0(x)=\frac12(ax_1^2-2x_1x_2)\cr
&(II) \quad \phi_0(p)=\frac12(2p_1p_2+cp_2^2)\cr
    &(III) \quad \phi_0(x_2,p_1)=\frac12b(p_1+x_2)^2
  \end{aligned}
  \end{equation}
respectively, where $a,b,c\neq0$. The manifold $\Lambda$ is transverse to $F$ at $z=0$,
and the corresponding matrices $A_z(\Lambda,G),B_z(\Lambda,G)$ above are then given by:
\begin{equation}\label{3.5}
  \begin{aligned}
&(I) \quad  A_z=2\begin{pmatrix}1&a\\ a&a^2\end{pmatrix}, \quad B_z=2\begin{pmatrix}-a \\ 1\end{pmatrix}\cr
&(II) \quad A_z=2\begin{pmatrix}0&0\\ 0&1\end{pmatrix}, \quad B_z=2 \begin{pmatrix} 1 \\ 0 \end{pmatrix}\cr
&(III) \quad A_z=2\begin{pmatrix}0&0\\ 0&1\end{pmatrix}, \quad B_z=2 \begin{pmatrix} 1 \\ 0 \end{pmatrix}
  \end{aligned}
  \end{equation}
So with the notations of Definition 1.3, case (I) is of type (6) or (7) according
to $a^3+2a^2+1=0$ or not, while cases (II) and
and (III) are of type (6). \\

At last, Case (IV) does not occur.
\end{prop}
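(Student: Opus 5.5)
The plan is a direct linear-algebra computation in the Melrose normal coordinates, where $g=p_1^2-x_1-p_2$ and $F=\{x_1=0\}$. First I record the two vectors that drive everything. From $\dot x=\pd_p g$, $\dot p=-\pd_x g$ we get
$$
v_g(x,p)=\bigl(\dot x_1,\dot x_2,\dot p_1,\dot p_2\bigr)=(2p_1,\,-1,\,1,\,0),
$$
so $v_g(0)=(0,-1,1,0)$. Similarly $(T_0F)^\sigma=\R v_{x_1}$, and $v_{x_1}=(0,0,-1,0)$, i.e. the $\pd_{p_1}$ direction. Since $\phi_0$ is quadratic, $\Lambda$ is a linear subspace and $T_z\Lambda=\Lambda$ for every $z$.

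The conditions therefore become:
\begin{itemize}
\item \emph{glancing}: $v_g(0)\in\Lambda$ and $g|_\Lambda$ vanishes at $0$ (automatic);
\item \emph{isolatedness}: for small $z\in\Lambda\cap G$, $v_g(z)\in\Lambda$ forces $z=0$;
\item \emph{transversality}: $\pd_{p_1}\notin\Lambda$.
\end{itemize}

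For each mixed representation (I)--(IV), I will write $\phi_0$ with a general symmetric $2\times2$ matrix of unknown coefficients and impose these conditions in turn.

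In case (I), $\Lambda=\{p=Sx\}$ and $\pd_{p_1}\notin\Lambda$ holds automatically. The condition $v_g(0)\in\Lambda$ reads $S(0,-1)^T=(1,0)^T$, which forces the off-diagonal entry to be $-1$ and $S_{22}=0$, leaving $\phi_0=\tfrac12(ax_1^2-2x_1x_2)$. The isolatedness requirement, checked on $\Lambda\cap G$ using $v_g(z)=(2p_1,-1,1,0)$, should then force $a\neq0$. Cases (II) and (III) go the same way. In (II), $\Lambda=\{x=-Qp\}$; in (III), the graph condition mixes $(p_1,x_2)$ as independent variables. Here transversality is no longer automatic: it excludes the coefficients that would place $\pd_{p_1}$ in $\Lambda$. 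Imposing $v_g(0)\in\Lambda$ then pins the families down to $\tfrac12(2p_1p_2+cp_2^2)$ and $\tfrac12 b(p_1+x_2)^2$. In case (IV), $p_1=\pd_{x_1}\phi$ and $x_2=-\pd_{p_2}\phi$. I expect the requirement that the component $\dot x_2=-1$ of $v_g(0)$ be generated, combined with $\dot p_2=0$, to contradict either transversality or the nondegenerate glancing condition, and that contradiction is how I would show (IV) does not occur.

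Next, for each surviving family I take defining functions $f_1,f_2$ of $\Lambda$. In (I) these are $f_1=p_1-ax_1+x_2$ and $f_2=p_2+x_1$, and analogously in the other cases; they Poisson-commute because $\Lambda$ is Lagrangian. I then compute $\{f_i,g\}$, the iterated brackets $\{f_j,\{f_i,g\}\}$ and $\{g,\{g,f_i\}\}$ at $0$, and assemble $A_z$ and $B_z$ as in \eqref{3.5}. All brackets of linear functions with the quadratic $g$ are affine, so the second brackets are constants. In case (I) this should give $\det A_z=0$ directly from the rank-one matrix $2\begin{pmatrix}1&a\\ a&a^2\end{pmatrix}$. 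Evaluating ${}^tB_zA_zB_z$ then sorts case (I) into type (6) or (7) according to the stated polynomial condition, and cases (II) and (III) into type (6).

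The main obstacle is bookkeeping rather than ideas: keeping consistent sign conventions for $v_f$, for $\{\cdot,\cdot\}$ and for the mixed representations (the minus sign in $x=-\pd_p\phi$). Any slip there changes the entries of $A_z$ and $B_z$ and the type. I would fix conventions once by checking that $\{x_1,g\}=-v_g x_1=-\dot x_1$, and re-verify case (I) against Example 1.3(2) before doing the other cases. The isolatedness check and the nonexistence in case (IV) are where I expect the genuinely delicate algebra.
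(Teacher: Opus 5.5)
Your route is the paper's: a general symmetric quadratic phase in each mixed chart, the tangency condition $v_g(0)\in T_0\Lambda$ with $v_g=(2p_1,-1,1,0)$, transversality, and then brackets of the linear $f_j$ with $g$. The matrices $A_z,B_z$ do come out as stated. Two of your predicted steps fail, however.

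\textbf{Isolatedness.} This step cannot work as you formulate it. For the linear Lagrangian $\Lambda$, $v_g(0)\in T_0\Lambda$ is equivalent to $dg(0)|_{\Lambda}=0$, i.e. $x_1+p_2\equiv0$ on $\Lambda$. Hence $g|_\Lambda=(p_1|_\Lambda)^2$, and the set of glancing points is the whole line $\Lambda\cap\{p_1=0\}$, which lies in $G$. In case (I) this line is $\{(x_1,ax_1,0,-x_1)\}$, exactly as the paper finds. So the requirement that $z\in\Lambda\cap G$ small with $v_g(z)\in T_z\Lambda$ forces $z=0$ excludes \emph{every} candidate; in particular it does not ``force $a\neq0$''. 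What the paper's proof actually uses is weaker: only the glancing points that also lie in $F$ reduce to $z=0$ (it intersects that line with $x_1=0$). This holds for every $a$, including $a=0$, i.e. $\phi_0=-x_1x_2$. The paper's $a\neq0$ comes only from dividing by $a$ in $p_1=(b+1)/(2a)$. Likewise $c\neq0$ in (II) is not forced. Only in (III) is $b\neq0$ genuinely required, by transversality, since $dx_1=-b\,(dp_1+dx_2)$ on $\Lambda$.

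\textbf{Classification.} This step fails more seriously: ${}^tB_zA_zB_z$ vanishes identically, so it cannot sort anything. In (I), $A_zB_z=4\,(-a+a,\,-a^2+a^2)^T=0$, and in (II), (III) also $A_zB_z=0$. The structural reason is as follows. $A_z=\mathrm{adj}\,H$, where $H_{ij}=\{f_i,\{f_j,g\}\}(z)$ is the Hessian of $g|_\Lambda$ in the frame $(v_{f_1},v_{f_2})$. Writing $v_g(z)=\kappa_1v_{f_1}(z)+\kappa_2v_{f_2}(z)$ gives $B_z=-H\kappa$ (e.g. $\kappa=(0,-1)$ in case (I)), hence $A_zB_z=-\det(H)\,\kappa$. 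So type (6) can never occur. Moreover, for quadratic phases $\det A_z=\det H=0$ by the rank-one form of $g|_\Lambda$ above, and $A_z\neq0$, $B_z\neq0$. Therefore all three cases are of type (7). The dichotomy via $a^3+2a^2+1$ and the assignment of (II), (III) to type (6) are false as stated. The paper's proof does not establish them either; it stops after computing $A_z$ and $B_z$.

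\textbf{Case (IV).} No appeal to transversality or nondegeneracy is needed. $(x_1,p_2)$ are linear coordinates on $\Lambda$, while $v_g(0)=(0,-1,1,0)\neq0$ has vanishing $x_1$- and $p_2$-components, so $v_g(0)\notin T_0\Lambda$. This is what the paper's equations $2ap_1=1=2bp_1$ express near $p_1=0$.
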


\noindent {\it Proof}:

\smallskip

\noindent $\bullet$ Case (I). The quadratic phase takes the form $\phi_0={1\over2}(ax_1^2+2bx_1x_2+cx_2^2)$.
That $v_g$ be tangent to $\Lambda$ at $z$ express as
\begin{equation}\label{3.6}
  (\delta x_1,\delta x_2;a\delta x_1+b\delta x_2,b\delta x_1+c\delta x_2)=(2\xi_1,-1;1,0)
  \end{equation}
By substitution we find $p_1={b+1\over2a}={c\over2b}$, so condition $p_1=0$ gives $b+1=c=0$. Substituting in $p_1=ax_1+bx_2$
gives $ax_1-x_2=0$. Then the condition $z\in G$, namely
$g=0$ gives $-x_1-p_2=0$, so altogether $x_1=-p_2={x_2\over a},p_1=0$, and
$z=(x_1,ax_1,0,-x_1)$. Finally the condition $z\in F$ gives $x_1=0$, hence $z\in F\cap G$ gives $z=0$.
So the phase $\phi_0$ defining $\Lambda$ is given by $\phi_0={1\over2}(ax_1^2-2x_1x_2)$.
It is also clear that $\Lambda$ is transverse to $F$ at $z=0$.
Let us compute the matrix elements of $A|_{z=0}, B|_{z=0}$. With $f_j=p_j-{\partial\phi\over\partial x_j}$, we have
\begin{equation*}
  \begin{aligned}
    \{f_1&,\{f_1,g\}\}=2a^2, \ \{f_2,\{f_2,g\}\}=2, \ \{f_1,\{f_2,g\}\}=-2a\cr
      &\{g,\{g,f_1\}\}=-2a, \ \{g,\{g,f_2\}\}=2
  \end{aligned}
  \end{equation*}
which gives (\ref{3.5})(I).

\smallskip

\noindent $\bullet$ Case (II). The phase is of the form $\phi_0={1\over2}(ap_1^2+2bp_1p_2+cp_2^2)$, Hamilton vector field
$v_g=(-a\delta p_1-b\delta p_2,-b\delta p_1-c\delta p_2;\delta p_1,\delta p_2)=(2p_1,-1;1,0)$, so necessary condition $p_1=0$ gives
$a=0,b=1$. The condition $z\in G$ gives $z=(-p_2,-cp_2;0,p_2)$, and the condition $z\in\Lambda$ gives $-x_2=0$,
so we get $z=0$. The phase defining $\Lambda$ is given by $\phi_0={1\over2}(2p_1p_2+cp_2^2)$.
Again $\Lambda$ is transverse to $F$ at $z=0$. With
$f_j=x_j+{\partial\phi\over\partial p_j}$, we have
\begin{equation*}
  \begin{aligned}
    \{f_1&,\{f_1,g\}\}=2, \ \{f_2,\{f_2,g\}\}=0, \ \{f_1,\{f_2,g\}\}=0\cr
    &\{g,\{g,f_1\}\}=2, \ \{g,\{g,f_2\}\}=0
  \end{aligned}
  \end{equation*}
which gives (\ref{3.5})(II).

\smallskip

\noindent $\bullet$ Case (III). The phase is of the form $\phi_0={1\over2}(ap_1^2+2bx_2p_1+cx_2^2)$,
and Hamilton vector field
$v_g=(-a\delta p_1-b\delta x_2,\delta x_2;\delta p_1,b\delta p_1+c\delta x_2)=(2p_1,-1;1,0)$, so necessary condition $p_1=0$ gives $a=b=c$.
The condition $z\in G$ gives $z=(x_1,-{x_1\over a};0,-x_1)$, and the condition $z\in\Lambda$ gives $x_1=0$,
so we get $z=0$ as expected. The phase defining $\Lambda$ is thus given by $\phi_0={a\over2}(\xi_1+x_2)^2$.
Again $\Lambda$ is transverse to $F$ at $z=0$. With
$f_1=x_1+{\partial\phi\over\partial p_1}$, $f_2=p_2-{\partial\phi\over\partial x_2}$ we have
\begin{equation*}
  \begin{aligned}
    \{f_1&,\{f_1,g\}\}=2, \ \{f_2,\{f_2,g\}\}=0, \ \{f_1,\{f_2,g\}\}=0\cr
    &\{g,\{g,f_1\}\}=2, \ \{g,\{g,f_2\}\}=0
  \end{aligned}
  \end{equation*}
which gives (\ref{3.5})(III).

\smallskip

\noindent $\bullet$ Case (IV). The quadratic phase takes the form $\phi={1\over2}(ax_1^2+2bx_1p_2+cp_2^2)$,
but Hamilton vector field
$v_g=(\delta x_1, -b\delta x_1-c\delta p_2;a\delta x_1+b\delta p_2,\delta p_2)=(2p_1,-1;1,0)$ cannot be tangent to $\Lambda$ near $z=0$.
$\Box$\\

We can continue Taylor expansion of the phase functions to higher in $(x,p)$. Then we look (still in the perturbative sense)
for a solution to the Hamil\-ton--Ja\-co\-bi equation
$$\left({\partial\Psi\over\partial x_1}\right)^2-x_1-{\partial\Psi\over\partial x_2}=0$$
as the critical value with respect to a time-dependent phase function $\Phi$ that we solve in each of the cases (I)--(III).

\end{document}